\renewcommand\footnotetextcopyrightpermission[1]{} 
\DeclareMathOperator*{\argmin}{arg\,min}
\newcommand{\eg}{e.\,g.,\@}
\newcommand{\ie}{i.\,e.,\@}
\newcommand{\wrt}{w.\,r.\,t.\@}
\newcommand{\at}[1]{\protect\ensuremath{\mathsf{#1}}\xspace}
\newenvironment{problem1}{\begin{em}
  \refstepcounter{problem}{\vspace{1ex} \noindent \textsc{Minimum Conversion Tree Problem.}}}{
\end{em}} 
\newenvironment{problem2}{\begin{em}
		\refstepcounter{problem}
		{\vspace{1ex} \noindent \textsc{Plan Enumeration Problem.}}}{
\end{em}} 
\newcommand{\myparagraph}[1]{\vspace{0.1cm}\noindent\textbf{{#1}.~}}
\newcounter{enum}
\newenvironment{packed_enum}{
\begin{list}{\textbf{(\arabic{enum})}}{
  \setlength{\itemsep}{0pt}
  \setlength{\parskip}{0pt}
  \setlength{\labelwidth}{-5 pt}
  \setlength{\leftmargin}{0 pt}
  \setlength{\itemindent}{0pt}
  \usecounter{enum}}
}{\end{list}}
\definecolor{javared}{rgb}{0.6,0,0} 
\definecolor{javagreen}{rgb}{0.25,0.5,0.35} 
\definecolor{javapurple}{rgb}{0.5,0,0.35} 
\definecolor{javadocblue}{rgb}{0.25,0.35,0.75} 
\lstdefinelanguage{scala}{
	morekeywords={abstract,case,catch,class,def,%
		do,else,extends,false,final,finally,%
		for,if,implicit,import,match,mixin,%
		new,null,object,override,package,%
		private,protected,requires,return,sealed,%
		super,this,throw,trait,true,try,%
		type,val,var,while,with,yield},
	otherkeywords={=>,<-,<\%,<:,>:,\#,@},
	sensitive=true,
	morecomment=[l]{//},
	morecomment=[n]{/*}{*/},
	morestring=[b]",
	morestring=[b]',
	morestring=[b]"""
}
\let\oldnl\nl
\newcommand{\nonl}{\renewcommand{\nl}{\let\nl\oldnl}}
    \newcommand\figcaption{\def\@captype{figure}\caption}
    \newcommand\tabcaption{\def\@captype{table}\caption}
\newcommand{\add}[1]{{\textcolor{black}{{#1}}}}
\begin{document}

\setcopyright{none}

\fancyhead{}
\settopmatter{printacmref=false, printfolios=false}

\newcommand{\rheem}{\textsc{Rheem}\xspace}
\newcommand{\rheemx}{the \textsc{SystemX} optimizer\xspace}
\newcommand{\Rheemx}{The \textsc{SystemX} optimizer\xspace}
\newcommand{\open}{boundary\xspace}

\newcommand{\task}[1]{\protect{\texttt{#1}}\xspace}
\newcommand{\pl}[1]{\protect\ensuremath{\mathsf{#1}}\xspace}
\newcommand{\ds}[1]{\protect{{\em #1}}\xspace}

\newcommand{\rememberlines}{\xdef\rememberedlines{\number\value{AlgoLine}}}
\newcommand{\resumenumbering}{\setcounter{AlgoLine}{\rememberedlines}}

\newcommand{\placetextbox}[3]{
	\setbox0=\hbox{#3}
	\AddToShipoutPictureFG*{
		\put(\LenToUnit{#1\paperwidth},\LenToUnit{#2\paperheight}){\vtop{{\null}\makebox[0pt][c]{#3}}}%
	}%
}%

\title{Optimizing Applications for Cross-Platform Execution}
\title{RHEEMix in the Data Jungle: \\A Cost-based Optimizer for Cross-Platform Systems}

\author{
	Sebastian Kruse
}\authornote{Work partially done while interning at QCRI.}
\affiliation{
	\institution{Hasso Plattner Institute \\University of Potsdam}
	\country{Germany}
}
\author{Zoi Kaoudi}
\affiliation{
	\institution{Qatar Computing Research Institute\\ Hamad Bin Khalifa University}
	\country{Qatar}
}

\author{Bertty Contreras}
\affiliation{
	\institution{Qatar Computing Research Institute\\ Hamad Bin Khalifa University}
	\country{Qatar}
}

\author{Sanjay Chawla}
\affiliation{
	\institution{Qatar Computing Research Institute\\ Hamad Bin Khalifa University}
	\country{Qatar}
}

\author{Felix Naumann}
\affiliation{
	\institution{Hasso Plattner Institute \\University of Potsdam}
	\country{Germany}
}

\author{Jorge-Arnulfo Quian\'e-Ruiz}
\affiliation{
	\institution{Qatar Computing Research Institute\\ Hamad Bin Khalifa University}
	\country{Qatar}
}

\placetextbox{0.5}{0.97}{\color{blue}\fbox{Please refer to the latest version published at VLDB Journal 2020 https://doi.org/10.1007/s00778-020-00612-x}}%

\begin{abstract}
In pursuit of efficient and scalable data analytics, the insight that ``one size does not fit all'' has given rise to a plethora of specialized data processing platforms
and today's complex data analytics are moving beyond the limits of a single platform.
In this paper, we present the cost-based optimizer of Rheem, an open-source cross-platform system that copes with these new requirements.
The optimizer allocates the subtasks of data analytic tasks to the most suitable platforms.
Our main contributions are:
(i)~a mechanism based on graph transformations to explore alternative execution strategies;
(ii)~a novel graph-based approach to efficiently plan data movement among subtasks and platforms; and
(iii)~an efficient plan enumeration algorithm, based on a novel enumeration algebra.
We extensively evaluate our optimizer under diverse real tasks.
The results show that our optimizer is capable of selecting the most efficient platform combination for a given task, freeing data analysts from the need to choose and orchestrate platforms.
In addition, our optimizer allows tasks to run more than one order of magnitude faster by using multiple platforms instead of a single platform.
\end{abstract}

\maketitle

\section{Cross-Platform Data Processing}
\label{section:intro}

Modern data analytics are characterized by
(i)~increasing query/task\footnote{\small{Henceforth, we use the term task without loss of generality.}} complexity,
(ii)~heterogeneity of data sources, and
(iii)~a proliferation of data processing platforms (\emph{platforms}, for short).
As a result, today's data analytics often need to perform {\em cross-platform data processing}, \ie~running their tasks on more than one platform.
The research and industry communities have recently identified this need~\cite{stonebraker2015the,rheem-vision-edbt16} and have proposed systems to support different aspects of cross-platform data processing~\cite{bigdawg-demo,apacheBeam,apacheCalcite,rheem-system-vldb18,ires-bigdata,gog2015musketeer}.

The current practice to cope with cross-platform requirements is to write ad-hoc programs to glue different specialized platforms together~\cite{duggan2015bigdawg,apacheDrill,luigi,apacheBeam,apacheCalcite}.
This approach is not only expensive and error-prone, but it also requires to know the intricacies of the different platforms to achieve high efficiency.
Thus, there is an urgent need for a systematic solution that enables efficient cross-platform data processing without requiring users to specify which platform to use.
The holy grail would be to replicate the success of DBMSs to cross-platform applications: users formulate platform-agnostic data analytic tasks and an intermediate system decides on which platforms to execute each (sub)task with the goal of minimizing cost (\eg~runtime or monetary cost).
Recent research works have taken first steps towards that direction~\cite{SimitsisWCD12,gog2015musketeer,ires-bigdata,myria-cidr17}.
Nonetheless, they all lack important aspects.
For instance, none of these works consider different alternatives for data movement, even though
having a single way to move data from one platform to another (\eg~via files) may hinder cross-platform opportunities.
Additionally, most focus on specific applications, such as ETL~\cite{SimitsisWCD12} or specific platforms~\cite{polybase,miso}.
Recently, commercial engines, such as DB2 and Teradata, have extended their systems to support different platforms 
but none provides a systematic solution, \ie~users have to specify the platform to use.

The key component for such a systematic solution is  a {\em cross-platform optimizer} and it is thus the focus of this paper.
Concretely, we consider the problem of {\em finding the set of platforms that minimizes the execution cost of a given task}.
One might think of a rule-based optimizer: 
\eg~execute a task on a centralized/distributed platform when the input data is small/large.
However, this approach is neither practical nor effective.
First, setting rules at the task level implicitly assumes that all the operations in a task have the same computational complexity and input cardinality.
Such assumptions do not hold in practice, though.
Second, the cost of a task on any given platform depends on many input parameters, which hampers a rule-based optimizer's effectiveness as it oversimplifies the problem.
Third, as new platforms and applications emerge, maintaining a rule-based optimizer becomes very cumbersome.
We thus pursue a {\em cost-based} approach.

Devising a cost-based optimizer for cross-platform settings is quite challenging for many reasons:
(i)~the optimization search space grows exponentially with the number of atomic operations of the given data analytic task;
(ii)~platforms vastly differ \wrt~their supported operations and processing abstractions;
(iii)~the optimizer must consider the cost of moving data across platforms;
(iv) cross-platform settings are characterized by high uncertainty, \ie~data distributions are typically unknown and cost functions are hard to calibrate; and
(iv)~the optimizer must be extensible to accommodate new platforms and emerging application requirements.

In this paper, we delve into the cross-platform optimizer of \rheem~\cite{rheem-demo-sigmod16,rheem-system-vldb18}, our open source cross-platform system~\cite{rheemwebsite}.
To the best of our knowledge, our optimizer is the first to tackle all of the above challenges.
The idea is to split a single task into multiple atomic operators and to find the most suitable platform for each operator (or set of operators) so that its cost is minimized.
After giving an overview on our optimizer (Section~\ref{section:overview}), we present our major contributions:
\begin{packed_enum}
\item
We propose a plan inflation mechanism that is a very compact representation of the entire plan search space
and provide a cost model purely based on UDFs
(Section~\ref{section:enrichment}).

\item We model data movement for cross-platform optimization as a new graph problem, which we prove to be NP-hard, and propose an efficient algorithm to solve it (Section~\ref{section:datamovement}).

\item We devise a new algebra and a new lossless pruning technique to enumerate executable cross-platform plans for a given task in a highly efficient manner (Section~\ref{section:enumeration}).

\item We discuss how we exploit our optimization pipeline for performing progressive optimization in order to deal with poor cardinality estimates (Section~\ref{section:progressive}).

\item We extensively evaluate our optimizer under diverse tasks using real-world datasets and show that it allows tasks to run more than one order of magnitude faster by using multiple platforms instead of a single platform (Section~\ref{section:experiments}).
\end{packed_enum}

Eventually, we discuss related work (Section~\ref{section:relatedwork}) and conclude this paper with a summary (Section~\ref{section:conclusion}).

\section{Background and Overview}
\label{section:overview}

\myparagraph{\rheem background}
\rheem decouples applications from platforms in order to enable cross-platform data processing.
Although decoupling data processing was the driving motive when designing \rheem, we also adopted a three-layer decoupled optimization approach, as envisioned in~\cite{rheem-vision-edbt16}.
One can see this three-layer optimization as a separation of concerns for query optimization.
Overall, as \rheem applications have good knowledge of the tasks' logic and the data they operate on, they are in charge of any logical, such as operator re-ordering (the application optimization layer). 
\rheem receives from applications an optimized procedural plan, for which it determines the best platforms for execution (the core optimization layer).
Then, the selected platforms run the plan by performing further physical platform-specific optimizations, such as setting the data buffer sizes (the platform optimization layer).
\rheem is at the core optimization layer.

%
\rheem is composed of two main components (among others): the {\em cross-platform optimizer} and the {\em executor}.
The cross-platform optimizer gets as input a {\em \rheem plan} and produces an {\em execution plan} by specifying the platform to use for each operator in the \rheem plan.
In turn, the executor orchestrates and monitors the execution of the generated execution plan on the selected platforms.
In this paper, we focus on the cross-platform optimizer.
Below, we first detail what \rheem and execution plans are and then give an overview of our cross-platform optimizer.


\myparagraph{\rheem plan}
As stated above, the input to \rheem optimizer is a procedural \rheem plan, which is essentially a directed data flow graph.
The vertices are {\em \rheem operators} and the edges represent the data flow among the operators.
Only \at{Loop} operators accept feedback edges, thus enabling iterative data flows.
A \rheem plan without any loop operator is essentially an acyclic graph.
Conceptually, the data is flowing from source operators through the graph and is manipulated in the operators until it reaches a sink operator.
\rheem operators are platform-agnostic and define a particular kind of data transformation over their input, \eg~a \at{Reduce} operator aggregates all input data into a single output.

\begin{figure}[!h]
	\vspace{-0.1cm}
	\centering
	\includegraphics[width=0.9 \columnwidth]{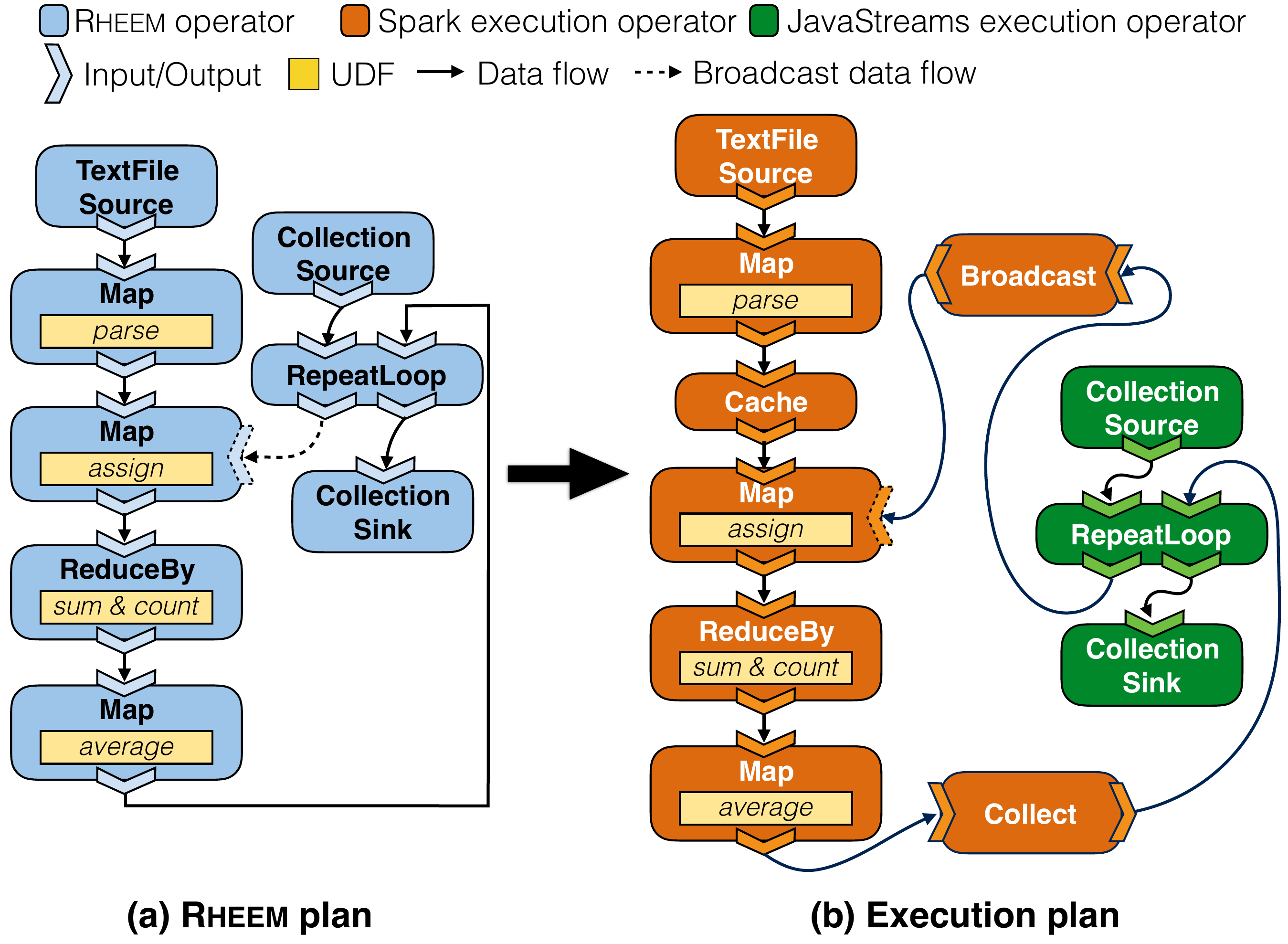}
	\vspace{-0.4cm}
	\caption{K-means example.}
	\label{figure:plan}
	\vspace{-0.3cm}
\end{figure}

\begin{example}
Figure~\ref{figure:plan}(a) shows a \rheem plan for k-means.
Data points are read via a \at{TextFileSource} and parsed using a \at{Map}, while the initial centroids are read via a \at{CollectionSource}.
The main operations of k-means (\ie~assigning the closest centroid to each data point and computing the new centroids) are repeated until convergence (\ie~the termination condition of \at{RepeatLoop}).
The resulting centroids are output in a collection.
For a tangible picture of the context in which our optimizer works, we point the interested reader to the examples of our source code\footnote{\url{https://github.com/rheem-ecosystem/rheem-benchmark}}.
\end{example}

\myparagraph{Execution plan}
Similarly to a \rheem plan, an execution plan is a data flow graph, but with two differences.
First, the vertices are (platform-specific) {\em execution operators}.
Second, the execution plan may comprise additional execution operators for data movement among platforms (\eg~a \at{Broadcast} operator).
Conceptually, given a \rheem plan, an execution plan indicates on which platform the executor must enact each \rheem operator.

\begin{example}
Figure~\ref{figure:plan}(b) shows the execution plan for the k-means \rheem plan when \pl{Spark}~\cite{apacheSpark} and Java Streams~\cite{javaStreams} are the only available platforms.
This plan exploits \pl{Spark}'s high parallelism for the large input dataset and at the same time benefits from the low latency of \pl{JavaStreams} for the small collection of centroids.
Also note the three additional execution operators for data movement (\at{Broadcast}, \at{Collect}) and to make data reusable (\at{Cache}).
As we show in Section~\ref{section:experiments}, such hybrid execution plans often achieve higher performance than plans with only a single platform.
\end{example}

\myparagraph{Cross-platform optimizer}
Recall that any logical or physical optimization takes place in the application layer.
Thus,
unlike traditional relational database optimizers, our cross-platform optimizer does not aim at finding operator orderings.
Instead, the goal of our optimizer is to select one or more platforms to execute a given \rheem plan in the most efficient manner.
The main idea is to split a single task into multiple atomic operators and to find the most suitable platform for each operator (or set of operators) so that the total cost is minimized.

\begin{figure}[!h]
	\vspace{-0.2cm}
	\centering
	\includegraphics[width=\columnwidth]{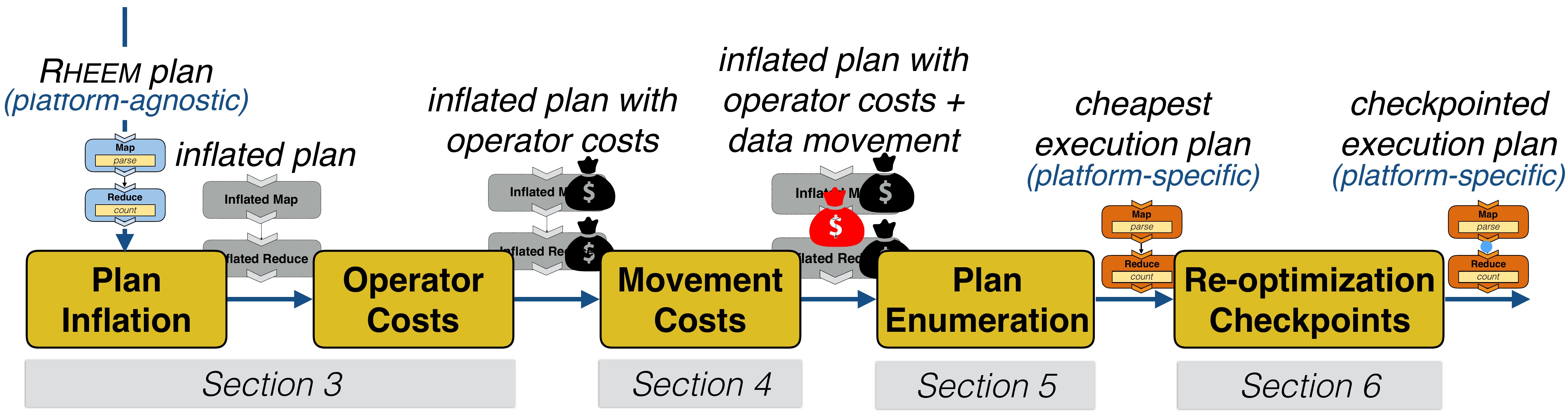}
	\vspace{-0.7cm}
	\caption{Cross-platform optimization pipeline.}
	\label{figure:optimizer}
	\vspace{-0.3cm}
\end{figure}

Figure~\ref{figure:optimizer} depicts the workflow of our optimizer.
At first, given a \rheem plan, the optimizer passes the plan through a plan enrichment phase (Section~\ref{section:enrichment}).
In this phase, the optimizer first {\em inflates} the input plan by applying a set of mappings.
These mappings determine how each of the platform-agnostic \rheem operators can be implemented on the different platforms with execution operators.
The result is an \emph{inflated \rheem plan} that can be traversed through alternative routes.
That is, the nodes of the resulting inflated plan are \rheem operators with all its execution alternatives.
Additionally, the optimizer {\em annotates} the inflated plan with estimates for both the intermediate result sizes and the costs of executing each execution operator.
Then, the optimizer takes a graph-based approach to determine how data can be moved most efficiently among execution operators of different platforms and again annotates the results and their costs to the plan (Section~\ref{section:datamovement}).
Finally, it uses all these annotations to determine the optimal execution plan via an enumeration algorithm (Section~\ref{section:enumeration}).
This algorithm is centered around an enumeration algebra and a highly effective, yet lossless pruning technique.
Eventually, the resulting execution plan can be enacted by \rheem cross-platform executor.

Additionally, as data cardinalities might be imprecise because of the uncertainty in cross-platform settings, \rheem monitors the real execution of the execution plan.
If the real cardinalities do not match the estimated ones, the executor pauses the execution of the plan and sends the subplan of still non-executed operators to the optimizer (Section~\ref{section:progressive}).
In return, the executor gets the newly optmized execution plan for the given subplan and resumes the execution.
We detail each of the above phases in the following four sections.

\myparagraph{Extensible design}
Note that the design of our optimizer allows for extensibility: adding a new platform to \rheem does not require any change to the optimizer codebase.
A developer has to simply provide new execution operators and their mappings to \rheem operators.
\rheem comes with default cost functions for the execution operators depending on their type.
Yet, the developer can use a profiling tool provided by \rheem to get her own cost functions for better cost estimates.


\section{Plan Enrichment}
\label{section:enrichment}

When our optimizer receives a \rheem plan, it has to do some preparatory work before it can explore alternative execution plans.
We refer to this phase as \emph{plan enrichment}.
Concretely, our optimizer
(i)~determines all eligible platform-specific execution operators for each \rheem operator (Section~\ref{section:enrichment_inflation}); and
(ii)~estimates the execution costs for these execution operators (Section~\ref{section:costestimation}).

\subsection{Inflation}
\label{section:enrichment_inflation}

While \rheem operators declare certain data processing operations, they do not provide an implementation and are thus not executable.
Therefore, our optimizer \emph{inflates} the \rheem plan with \add{all} corresponding execution operators, each providing an actual implementation on a specific platform.
A basic approach to determine corresponding execution operators for \rheem operators are mapping dictionaries, such as in~\cite{engineindependence,ires-bigdata}.
This approach allows only for $1$-to-$1$ \emph{operator mappings} between \rheem operators and execution operators -- more complex mappings are precluded, though.
However, different data processing platforms work with different abstractions:
While databases employ relational operators and Hadoop-like systems build upon \texttt{Map} and \texttt{Reduce}, special purpose systems (\eg~graph processing systems) rather provide specialized operators (\eg~for the \textsf{PageRank} algorithm).
Due to this diversity, 1-to-1 mappings are often insufficient and a flexible operator mapping technique is called for.

\myparagraph{Graph-based operator mappings}
To this end, we define operator mappings in terms of \emph{graph mappings},
which, in simple terms, map a matched subgraph to a substitute subgraph.
We formally define an operator mapping as follows.


\begin{definition}[Operator mapping]
An operator mapping $p \rightarrow s$ consists of the graph pattern $p$ and the substitution function $s$.
Assume that $p$ matches the subgraph $G$ of a given \rheem plan.
Then, the operator mapping designates the substitute subgraph $G' := s(G)$ for this match via the substitution function.
\end{definition}

Usually, the matched subgraph $G$ is a constellation of \rheem operators and the substitute subgraph $G'$ is a corresponding constellation of execution operators.
However, $G$ may comprise execution operators; and $G'$ may be a constellation of \rheem operators. The latter cases allow our optimizer to be able to choose among platforms that do not natively support certain operators. 
Figure~\ref{figure:enrichment}(a) exemplifies some mappings for our k-means example.

\begin{figure}[t!]
	\centering
	\includegraphics[scale=0.17]{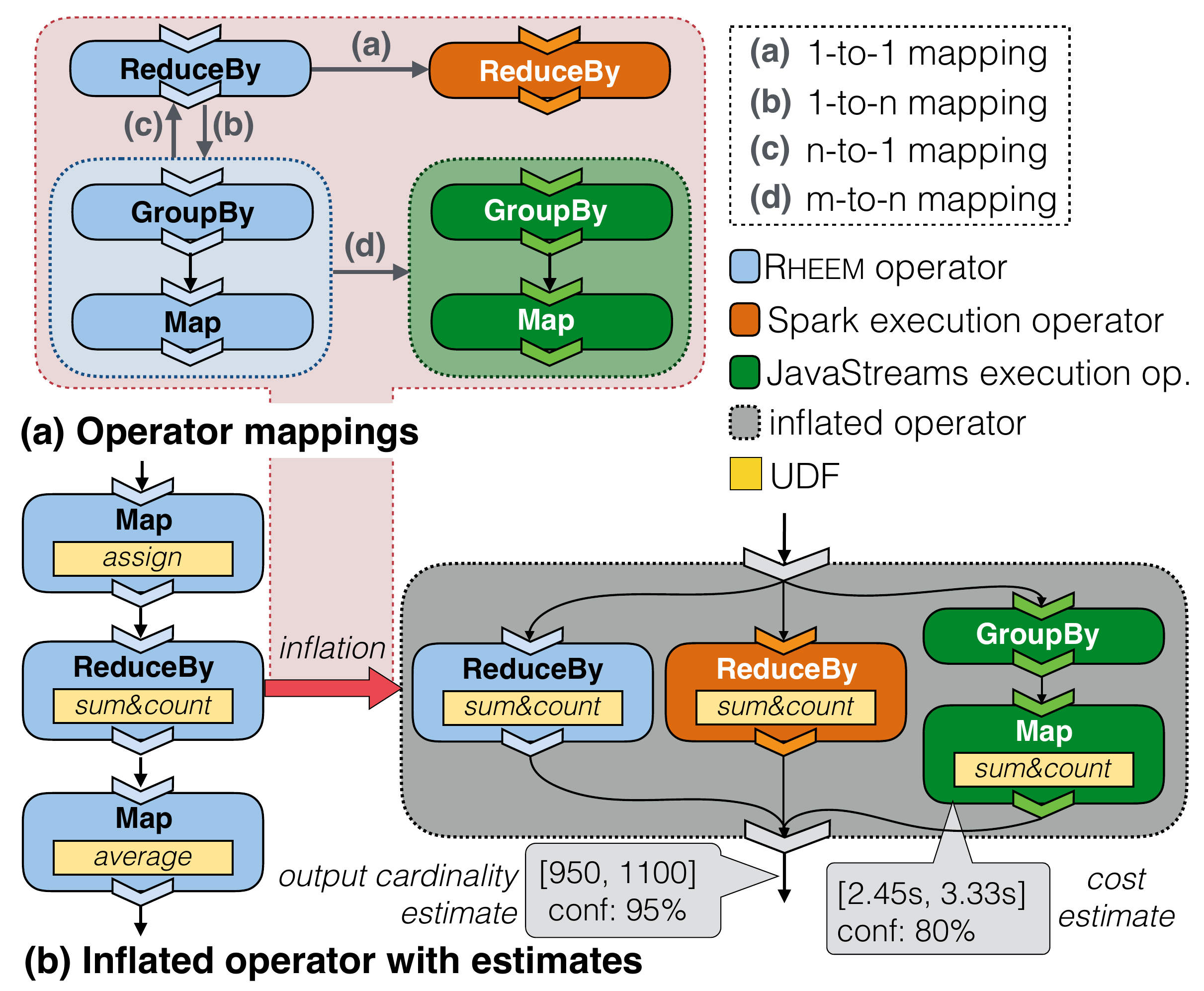}
	\vspace{-0.3cm}
	\caption{\rheem plan enrichment.}
	\label{figure:enrichment}
	\vspace{-0.2cm}
\end{figure}

\begin{example}[1-to-n mapping]
In Figure~\ref{figure:enrichment}(a), the 1-to-n mapping transforms the \at{ReduceBy} \rheem operator to a constellation of \at{GroupBy} and \at{Map} \rheem operators%
, which in turn are transformed to Java Streams execution operators.
\end{example}

In contrast to 
$1$-to-$1$ mapping approaches, our graph-based approach provides a more powerful means to derive execution operators from \rheem operators.
Our  approach also allows us to break down complex operators (\eg~\at{PageRank}) and map it to platforms that do not support it natively.
\add{Mappings are provided by developers when adding new \rheem or execution operators.}

\myparagraph{Inflated operator}
It is important to note that, \add{during the inflation phase,} our optimizer does not apply operator mappings by simply replacing matched subgraphs $G$ by \add{one of} their substitute subgraphs $G'$ as doing so would cause two insufficiencies:
First, this strategy would always create only a single execution plan, thereby precluding any cost-based optimization.
Second, that execution plan would be dependent on the order in which the mappings are applied, because once a mapping is applied, other relevant mappings might become inapplicable.
We overcome both insufficiencies by introducing \emph{inflated operators} in \rheem plans.
An inflated operator replaces a matched subgraph and comprises that matched subgraph \emph{and all} the substitute graphs.
The original subgraph is retained so that operator mappings can be applied \emph{in any order}; and each inflated operator can contain multiple substitute graphs, thereby accounting for \emph{alternative} operator mappings.
Ultimately, an inflated operator expresses alternative subplans inside \rheem plans.
\add{Thus, our graph-based mappings do not determine which platform to use for each \rheem operator but list all the alternatives for the optimizer to choose from.
This is in contrast to Musketeer~\cite{gog2015musketeer} and Myria~\cite{myria-cidr17}, which use their rewrite rules to directly obtain the platform that each operator should run on.}

\begin{example}[Operator inflation]
Consider again our k-means example whose plan contains a \at{ReduceBy} operator.
Figure~\ref{figure:enrichment}(b) depicts the inflation of that operator.
Concretely, the \rheem \at{ReduceBy} operator is replaced by an inflated operator that hosts both the original and two substitute subgraphs.
\end{example}

After our optimizer has exhaustively applied all its mappings,
the resulting {\em inflated \rheem plan} defines \emph{all} possible combinations of execution operators of the original \rheem plan -- but without \emph{explicitly} materializing them.
In other words, an inflated \rheem plan is a highly compact representation of all execution plans.



\subsection{Operators Cost Estimation}
\label{section:costestimation}
Once a \rheem plan is inflated, the optimizer estimates and annotates costs to each execution operator (see Figure~\ref{figure:enrichment})  by traversing the plan in a bottom-up fashion.
Cardinality and cost estimation are extremely challenging problems -- even in highly cohesive systems, such as relational databases, which have detailed knowledge on execution operator internals and data statistics~\cite{howgoodoptimizersare}.
As \rheem has little control on the underlying platforms, the optimizer uses a modular and fully {\em UDF-based cost model}.
Furthermore, it represents cost estimates as intervals with a confidence value, which allows it to perform on-the-fly re-optimization (Section~\ref{section:progressive}).

\begin{figure}[t!]
	\centering
	\includegraphics[scale=0.16]{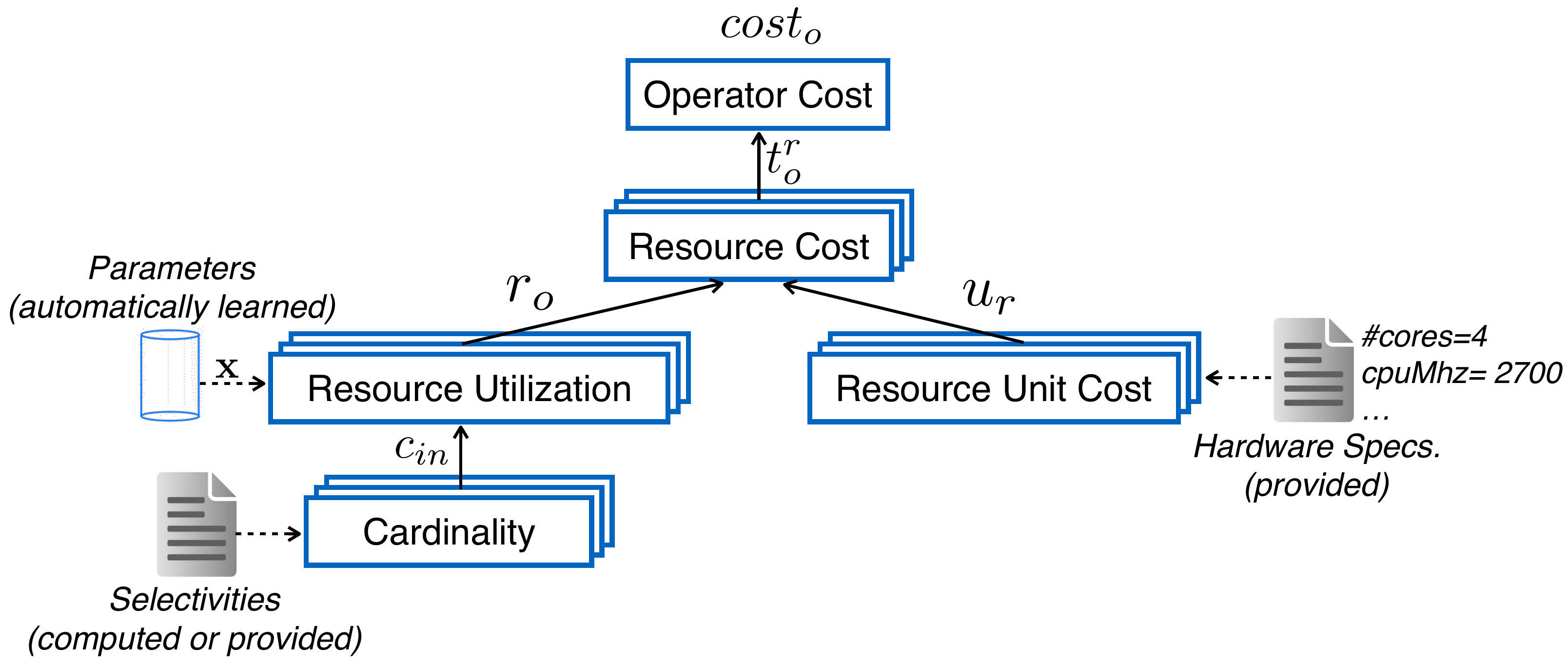}
	\vspace{-0.1cm}
	\caption{Operator cost estimation process.}
	\label{figure:costmodel}
	\vspace{-0.3cm}
\end{figure}

\myparagraph{Cost estimation}
We propose a simple, yet powerful approach that decouples the cost formulas to enable developers to intervene at any level of the cost estimation process.
This approach also allows the optimizer to be portable across different deployments.
Figure~\ref{figure:costmodel} illustrates this cost estimation process, where the boxes represent all the UDFs in the process.
The total cost estimate for an execution operator $o$ depends on the cost of the resources it consumes (CPU, memory, disk, and network), defined as: $cost_o = t_o^{CPU} + t_o^{mem} + t_o^{disk} + t_o^{net}$.
The cost of each resource $t_o^r$ is the product of (i)~its utilization $r_o$ and (ii)~the unit costs $u_r$ (\eg~how much one CPU cycle costs).
The latter depends on hardware characteristics (such as number of nodes and CPU cores), which are encoded in a configuration file for each platform.
On the other hand, the resource utilization is estimated by a cost function $r_o$ that depends on the input cardinality $c_{in}$ of its corresponding \rheem operator. 
For instance, the cost function to estimate the CPU cycles required by the \at{SparkFilter} operator is $CPU_{SF} := c_{in}(Filter) \times \alpha + \beta$, where parameter $\alpha$ denotes the number of required CPU cycles for each input data quantum and parameter $\beta$ describes some fixed overhead for the operator start-up and scheduling.
Notice that $cost_o$ contains the parameters of all the resources.
Obtaining the right values for these parameters, such as the $\alpha, \beta$ values, is very time-consuming if it is done manually via profiling.
For this reason, \rheem provides an {\em offline} cost learner module that uses historical execution logs in order to {\em learn} these parameters.
We model the cost as a regression problem.  The estimated execution time is $t' = \sum_{i}cost_i({\bf x}, c_i)$ where ${\bf x}$ is a vector with all the parameters that we need to learn, and $c_i$ is the input cardinalities. 
Let $t$ be the real execution time, we then seek ${\bf x}$ that minimizes the difference between $t$ and $t'$:  ${\bf x}_\text{min}=\argmin_{\bf x}\ loss(t, t' )$.
We consider a \emph{relative} loss function defined as: $loss(t, t')= \left(\frac{|t - t'| + s}{t + s}\right)^2$, where $s$ is a regularizer inspired by additive smoothing that tempers the loss for small $t$.
We then use a genetic algorithm~\cite{mitchell1998introduction} to find ${\bf x}_\text{min}$. 
Further discussing the cost learner is out of the scope of this paper, please refer to~\cite{rheem-system-vldb18} for more details.

\myparagraph{Cardinality estimation}
To estimate the output cardinality of each \rheem operator, the optimizer first computes the output cardinalities of the source operators via sampling
and then traverses the inflated plan in a bottom-up fashion. 
For this, each \rheem operator is associated with a cardinality estimator function, which considers its properties (\eg~selectivity and number of iterations) and input cardinalities.
For example, the \at{Filter} operator uses $c_{out}$(\at{Filter})$ := c_{in}$(\at{Filter})$ \times \sigma_f$, where $\sigma_f$ is the selectivity of the user's \at{Filter} operator.
To address the uncertainty inherent to the selectivity estimation the optimizer expresses the cardinality estimates in an interval with a confidence value.
Basically, this confidence value gives the likelihood that the interval indeed contains the actual cost value.
For the selectivities the optimizer relies on basic statistics such as the number of output tuples and of distinct values. 
These statistics can be provided by the application/developer or obtained by runtime profiling, similar to~\cite{hueske2012opening,sofa}.
If not available, the optimizer uses default values for the selectivities and relies on re-optimization for correcting the execution plan if necessary, similar to~\cite{db2-federated}.
Note that we intentionally do not consider devising a sophisticated mechanism for cardinality estimation as it is an orthogonal problem~\cite{selinger1979access} that has been studied independently over the years.
This allows us to study the effectiveness of our optimization techniques without interference from cardinality estimation.

\section{Data Movement}
\label{section:datamovement}

Selecting optimal platforms for an execution plan might require to move data across platforms and transform them appropriately for the target platform.
This leads to an inherent trade-off between choosing the optimal execution operators and minimizing data movement and transformation costs.
Our optimizer must properly explore this trade-off to find the overall optimal execution plan.

However, planning and assessing communication is challenging for various reasons.
\add{First, there might be several alternative data movement strategies, \eg~from RDD to a file or to a Java object.
One might think about transferring data between two platforms by serializing and deserializing data via a formatted file, such as in~\cite{gog2015musketeer,myria-cidr17}.
While this simple strategy is feasible, it is not always the most efficient one.
In fact, having only a file-based data movement strategy may lead to missing many opportunities for cross-platform data processing.
Second, the costs of each strategy must be assessed so that our optimizer can explore the trade-off between selecting optimal execution operators and minimizing data movement costs.
Considering the costs of different data movement strategies is also crucial for finding cross-platform opportunities.
Third, data movement might involve several intermediate steps to connect two operators of different platforms.}


To address these challenges, we represent the space of possible communication steps as a \emph{channel conversion graph}~(Section~\ref{section:channelconversion}).
This graph representation allows us to model the problem of finding the most efficient communication path among execution operators as a new graph problem: the {\em minimum conversion tree} problem~(Section~\ref{section:minimumtree}).
We devise a novel algorithm to efficiently solve this graph problem~(Section~\ref{section:algorithm}).

\subsection{Channel Conversion Graph}
\label{section:channelconversion}

\newcommand{\ccg}{CCG\xspace}
\newcommand{\Ccg}{CCG\xspace}
\newcommand{\ccgs}{CCGs\xspace}
\newcommand{\Ccgs}{CCGs\xspace}

The channel conversion graph (\emph{CCG} for short) is a graph whose vertices are data structure types (\eg~an RDD in Spark) and whose edges express conversions from one data structure to another.
Before formally defining the \ccg, let us first explain how we model data structures ({\em communication channels}) and data transformation ({\em conversion operators}).

\myparagraph{Communication channel}
Data can flow among operators via communication channels (or simply \emph{channels}), which form the vertices in the \ccg.
This can be for instance an internal data structure or stream within a data processing platform, or simply a file.
For example, the yellow boxes in Figure~\ref{figure:channel-conversion-graph} depict the standard communication channels considered by our optimizer for Java Streams and Spark.
Note that communication channels can be {\em reusable}, \ie~they can be consumed multiple times, or non-reusable, \ie~once they are consumed they cannot be used anymore.
For instance, a file is reusable, while a data stream is usually not.

\myparagraph{Conversion operator}
In certain situations, it becomes necessary to convert channels from one type to another, \eg~it might be necessary to convert an RDD to a \add{CSV file}.
Such conversions are handled by conversion operators, which form the edges in the \ccg.
Conversion operators are in fact regular execution operators:
For example, \rheem provides the \add{\textsf{Spark\-RDD\-To\-CSV-File}} operator, which simply reads the RDD and writes it to a \add{CSV file}.
Intuitively, the associated communication costs are incurred neither by the RDD nor the file but by the conversion operator.
Thus, given a cardinality estimate of the data to be moved, the optimizer computes the conversion costs as regular execution operator costs.

\myparagraph{Channel conversion graph}
We can now integrate communication channels and their conversions in a graph.

\begin{definition} [Channel conversion graph]
A {\em \ccg} is a directed graph $G:=(C, E, \lambda)$, where the set of vertices $C$ contains the channels, $E$~comprises the directed edges indicating that the source channel can be converted to the target channel, and $\lambda\colon~E~\rightarrow~O$ is a labeling function that attaches the appropriate conversion operator $o\in O$ to each edge $e\in E$.
\end{definition}

\rheem provides the \ccg with generic channels, \eg~CSV files, together with the channels of the supported platforms, \eg~RDDs.
Developers can easily extend the \ccg if needed, \eg~when adding a new platform to \rheem.
To this end, they exploit the fact that conversion operators are existing execution operators.
Thus, they simply provide mappings from the new channels to the existing ones by using the existing execution operators (often the source and sink operators which are for reading and writing data).


\subsection{Minimum Conversion Tree Problem}
\label{section:minimumtree}

\Ccgs allow us to model the problem of planning data movement as a \emph{graph problem}.
This approach is very flexible: If there is \emph{any} way to connect execution operators via a sequence of conversion operators, we will discover it.
Unlike other approaches~\cite[\eg][]{gog2015musketeer,ires-bigdata},
developers do not need to provide conversion operators for all possible source and target channels.
It is therefore much easier for developers to add new platforms to \rheem and make them interoperable with the other platforms.
Let us further motivate the utility of \ccgs for data movement with a concrete example.

\begin{example}
Figure~\ref{figure:channel-conversion-graph} shows an excerpt of \rheem's default CCG that is used to determine how to move data from a \textsf{JavaMap} execution operator ({\em root}) to a \add{\textsf{FlinkReduce}} ({\em target$_1$}) and a \add{\textsf{SparkMap}} execution operator ({\em target$_2$}).
While the {\em root} produces a Java \at{Stream} as output channel, $target_1$ and $target_2$ accept only a \add{Flink \at{DataSet}} and a (cached) \at{RDD}, respectively, as input channels.
Multiple conversions are needed to serve the two \add{target} operators.
\label{example:ccg}
\end{example}
\vspace{-0.2cm}

\begin{figure}[t]
	\centering
	\includegraphics[width=\linewidth]{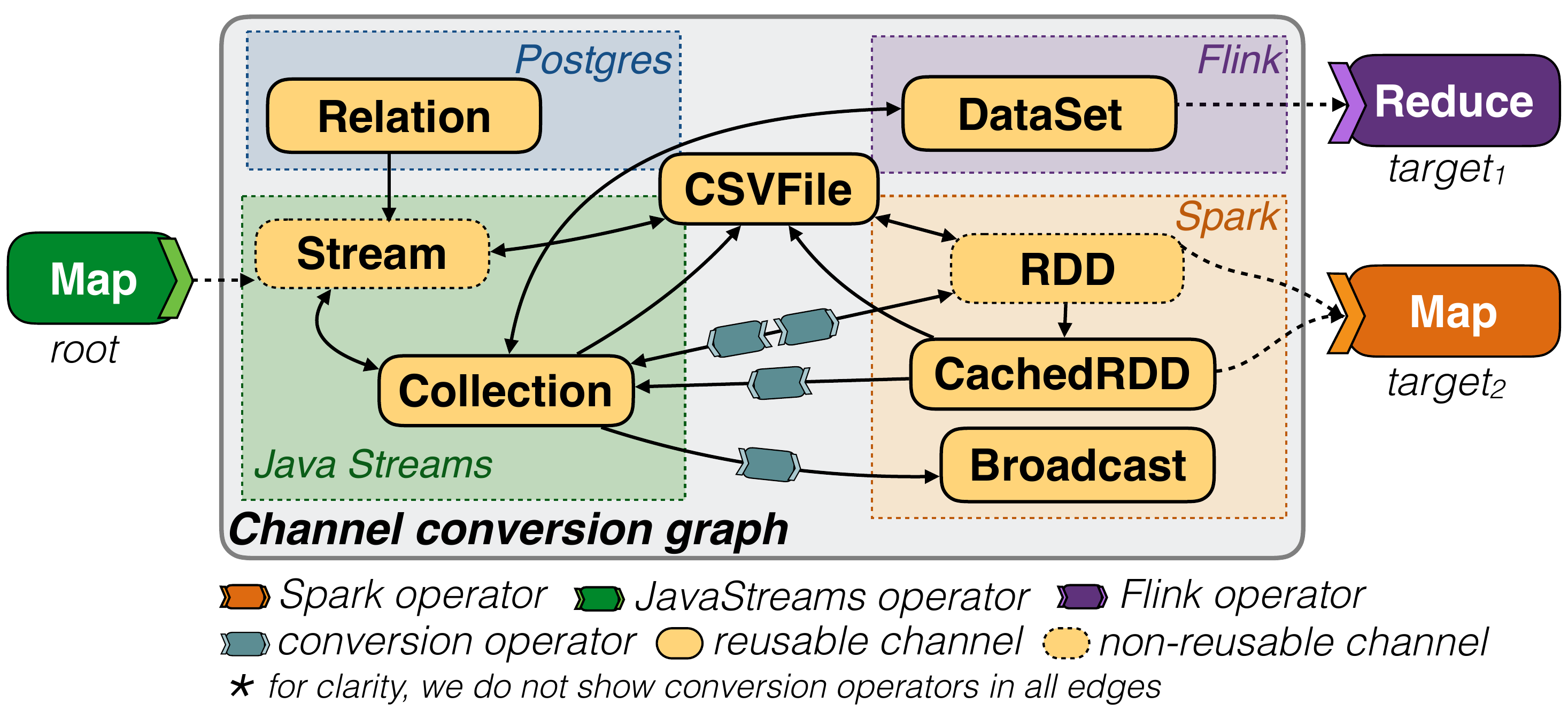}
	\vspace{-0.6cm}
	\caption{Example channel conversion graph along with root and target operators from different platforms.
	\label{figure:channel-conversion-graph}}
	\vspace{-0.2cm}
\end{figure}

Note that the \ccg also enables the optimizer to use multiple intermediate steps to connect two operators. For example, for transferring data from \pl{Postgres} to \pl{Flink} or \pl{Spark} in Figure~\ref{figure:channel-conversion-graph}, there are two intermediate channels involved, \ie~\at{Stream} and \at{Collection}.
We model such complex scenarios of finding the most efficient communication path from a root producer to multiple target consumers as the \emph{minimum conversion tree} (MCT) problem.

\begin{problem1}
Given a \emph{root channel} $c_r$, $n$~\emph{target channel sets} $C_{t_i}$ ($0 < i \leq n$), and the \ccg $G = (C, E, \lambda)$, find a subgraph $G'$ (\ie~a minimum conversion tree), such that:
\begin{packed_enum}
\item $G'$ is a directed tree with root $c_r$ and contains at least one channel $c_{t_i}$ for each target channel set $C_{t_i}$, where $c_{t_i}\in C_{t_i}$.
\item Any non-reusable channel in $G'$, must have a single successor, \ie~a conversion or a consumer operator. \label{prop:reuse} 
\item The sum of costs of all edges in $G'$ is minimized, \ie~there is no other subgraph $G''$ that satisfies the above two conditions and has a smaller cost than $G'$.
The cost of an edge $e$ is the estimated cost for the associated conversion operator $\lambda(e)$. 
\end{packed_enum}
\label{def:mtc}
\end{problem1}

\begin{example}\label{example:mct}
	In the example of Figure~\ref{figure:channel-conversion-graph}, the root channel is $c_r := \at{Stream}$ and the target channel sets are \add{$C_{t_1} := \{\at{DataSet}\}$} (for target$_1$) and $C_{t_2} := \{\at{RDD}, \at{CachedRDD}\}$ (for target$_2$).
	A minimum conversion tree for this scenario could look as follows:
	The \at{Stream} root channel is converted to a Java \at{Collection}.
	This \at{Collection} is then converted twice; namely to a \add{Flink \at{DataSet}} (thereby satisfying $C_{t_1}$) and to an \at{RDD} (thereby satisfying $C_{t_2}$).
	Note that this is possible only because \at{Collection} is reusable.
\end{example}

Although our MCT problem seems related to other well-studied graph problems, such as the minimum spanning tree and single-source multiple-destinations shortest paths, it differs from them for two main reasons.
First, MCTs have a fixed root and need not span the whole \ccg.
Second, MCT seeks to minimize the costs of the conversion tree as a whole rather than its individual paths from the root to the target channels.
It is the Group Steiner Tree (GST) problem\add{~\cite{group-steiner}} that is the closest to our MCT problem: There, $n$ sets of vertices should be connected by a minimal tree.
However, this problem is typically considered on undirected graphs and without the notion of non-reusable channels.
Furthermore, GST solvers are often designed only for specific types of graphs, such as planar graphs or trees.
These disparities preclude the adaption of existing GST solvers to the MCT problem.
However, the GST problem allows to show the NP-hardness of our MCT problem.

\begin{restatable}{theorem}{graphproblem}
\label{theorem:graphproblem}
The MCT problem is NP-hard.
\end{restatable}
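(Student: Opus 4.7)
The plan is to establish NP-hardness by a polynomial-time reduction from the Group Steiner Tree (GST) problem on undirected edge-weighted graphs, which the paper itself identifies as the closest related problem and which is well-known to be NP-hard. More precisely, I would reduce from the \emph{rooted} variant of undirected GST (reducible from standard GST in polynomial time by trying each vertex as the root, or by a small gadget) since the MCT formulation comes equipped with a fixed root channel $c_r$.

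Given a rooted GST instance with undirected graph $G=(V,E)$, edge weights $w:E\to \mathbb{R}_{\geq 0}$, root $r\in V$, and vertex groups $S_1,\dots,S_n\subseteq V$, I would construct a CCG instance as follows. The channel set is $C:=V$, with \emph{every} channel marked as reusable. For each undirected edge $\{u,v\}\in E$ with weight $w(\{u,v\})$, I introduce two directed edges $(u,v)$ and $(v,u)$ in the CCG, each labeled with a fresh conversion operator whose cost equals $w(\{u,v\})$. Finally, I set the root channel to $c_r:=r$ and the target channel sets to $C_{t_i}:=S_i$ for $i=1,\dots,n$. This construction is clearly polynomial in the size of the GST instance.

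I would then establish the correctness of the reduction by proving that the rooted GST instance admits a tree of total weight at most $k$ if and only if the constructed MCT instance admits a conversion tree of cost at most $k$. For the forward direction, any undirected tree $T$ in $G$ rooted at $r$ and hitting each $S_i$ can be oriented away from $r$ to yield a directed arborescence in the CCG whose cost equals the weight of $T$; all MCT conditions are satisfied because every channel is reusable, making condition~(\ref{prop:reuse}) vacuous. For the backward direction, a feasible MCT $G'$ is by definition an arborescence rooted at $c_r=r$, so dropping edge orientations yields an undirected tree in $G$ of the same total cost that contains $r$ and at least one vertex from each $S_i$; no undirected cycle can arise because two oppositely oriented copies of the same edge would create two distinct parents in the arborescence. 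Hence optimal costs coincide, completing the reduction.

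The main obstacle will be ensuring that the restrictions unique to MCT -- in particular the non-reusability condition on channels -- cannot be exploited to make the problem easier than GST, which I sidestep by making all channels reusable, so the reduction cannot accidentally fall into an easier special case. A secondary point of care is verifying that the rooted/directed nature of MCT does not trivialize the problem relative to undirected GST; this is handled cleanly by the symmetric pair of directed edges introduced for each undirected edge, which preserves both feasibility and cost in both directions of the equivalence.
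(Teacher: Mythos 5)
Your reduction is essentially the same as the paper's: both establish NP-hardness by mapping a (rooted) Group Steiner Tree instance to an MCT instance with vertices as channels, groups as target channel sets, the GST root as the root channel, and edge weights as conversion costs. You go further than the paper by explicitly handling directedness via antiparallel edge copies, neutralizing the reusability constraint by declaring all channels reusable, and proving both directions of the cost equivalence -- details the paper's terse proof leaves implicit -- so your argument is correct and, if anything, more complete.
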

\begin{proof}
	The NP-hard problem of GST~\cite{group-steiner} can be reduced in polynomial time to an MCT problem.
	Recall a GST instance consists of a weighted graph $G$ with positive edge weights, a root vertex $r$, and $k$ subsets (groups) of vertices from $G$.
	The goal of GST is to find a tree $G'$ on $G$ that connects $r$ with at least one vertex of each group.
	We convert an instance of GST to MCT as follows.
	We provide as input to MCT
	(i)~a channel conversion graph that has exactly the same vertices and edges with $G$,
	(ii)~the vertex $r$ as root channel,
	(iii)~the $k$ groups as target channel sets, and
	(iv)~the edge weights of the graph as conversion costs.
	This conversion is clearly of polynomial complexity.
\end{proof}

\subsection{Finding Minimum Conversion Trees}
\label{section:algorithm}
Because the MCT problem differs from existing graph problems, we devise a new algorithm to solve it (Algorithm~\ref{algorithm:shortest-tree}).
Given a \ccg $G$, a root channel $c_r$, and $n$ target channel sets $\mathscr{C}_t := \{C_{t_1}, C_{t_2}, ..., C_{t_n}\}$, the algorithm proceeds in two principal steps.
First, it simplifies the problem by modifying the input parameters ({\em kernelization}, Line~1).
Then, it \add{exhaustively} explores the graph ({\em channel conversion graph exploration}, Line~2) to find the MCT (Line~3).
We discuss these two steps in the following. 

\begin{algorithm}[t!]\small
	\SetKwComment{Comment}{$\triangleright$\ }{}
	\caption{Minimum conversion tree search.}
	\label{algorithm:shortest-tree}
	\KwIn{conversion graph $G$, root channel $c_r$, target channel sets $\mathscr{C}_t$}
	\KwOut{minimum conversion tree}
	\SetKwProg{Fn}{Function}{ }{ }
	$\mathscr{C}_t\leftarrow\texttt{kernelize}(\mathscr{C}_t)$\;
	$T_{c_r} \leftarrow \texttt{traverse}(G, c_r, \mathscr{C}_t, \emptyset, \emptyset)$\;
	\Return{$T_{c_r}[\mathscr{C}_t]$}\;
	\rememberlines
\end{algorithm}


\myparagraph{Kernelization}
In the frequent case that two (or more) target consumers target$_i$ and target$_j$ accept the same channels, \ie~$C_{t_i}=C_{t_j}$, with at most one non-reusable channel and at least one reusable channel, we can merge them into a single set by discarding the non-reusable channel: $C_{t_{i,j}}=\left\{c~\mid c~\in~C_{t_i} \wedge c \text{ is reusable}\right\}$.
The key point of this kernelization is that it decreases the number of target channel sets and thus, reduces the maximum degree (fanout) of the MCT, which is a major complexity driver of the MCT problem.
In fact, in the case of only a single target channel set the MCT problem becomes a single-source single-destination shortest path problem, which we can solve with, \eg~Dijkstra's algorithm.

\begin{example}[Merging target channel sets] In Figure~\ref{figure:channel-conversion-graph}, $\textsf{target}_2$ accepts the channels $C_{t_2}=\{\textsf{RDD}, \textsf{CachedRDD}\}$.
\add{Assume that the other consumer, $\textsf{target}_1$ would be a \emph{SparkReduce} operator instead, which accepts the same set of channels as $\textsf{target}_2$.}
In this case, we can merge their input channels into \add{$C_{t_{1,2}}=\{\textsf{CachedRDD}\}$}.
\end{example}

\begin{lemma}\label{lemma:kernelization}
A solution for a kernelized MCT problem also solves the original MCT problem.
\begin{proof}
	Assume an original MCT problem $M_o$ with target channel sets $C_{t_1}$, \dots, $C_{t_k}$ and a kernelized MCT problem $M_k$ for which those $C_{t_i}$ have been merged to a single target channel set $C^{t*}$.
	Now let $t_k$ be an MCT for $M_k$.
	Obviously, $t_k$ is also a conversion tree for $M_o$, but it remains to show that it is also minimum.
	For that purpose, we assume that $t_k$ was not minimum for $M_o$; in consequence, there has to be some other MCT $t_o$ for $M_o$.
	If $t_o$ satisfies all target channel sets of $M_o$ (\ie~the $C_{t_i}$) via the same communication channel $c$, then $t_o$ would also be an MCT for $M_k$, which contradicts our assumption.
	Specifically, $c$ must be a reusable channel, as it satisfies multiple target channel sets.
	In contrast, if $t_o$ satisfies the target channel sets of $M_o$ with different channels, then there has to be at least one reusable channel $c'$ among them, because we kernelize only such target channel sets that have \emph{at most} one non-reusable channel.
	Because $c'$ alone can already satisfy all target channel sets of $M_o$, it follows that $t_o$ produces more target channels than necessary and is therefore not minimal -- which also contradicts our assumption.
\end{proof}
\end{lemma}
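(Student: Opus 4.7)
The plan is to show that any minimum conversion tree $t_k$ for the kernelized instance $M_k$ is both a feasible conversion tree for the original instance $M_o$ and cannot be beaten by any other feasible tree for $M_o$. Feasibility is the easy half; the main work goes into ruling out a strictly cheaper tree for $M_o$, for which I would use a case analysis on how a hypothetical better tree satisfies the merged target channel sets.

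First I would argue feasibility. By definition of kernelization, the merged set $C^{t*}$ consists only of channels that are reusable and that appear in every $C_{t_i}$ being merged. Hence $t_k$ satisfies $C^{t*}$ via some reusable channel $c$, and since $c \in C_{t_i}$ for each merged $i$ and $c$ can serve arbitrarily many successors, the same tree $t_k$ simultaneously satisfies every $C_{t_i}$ in $M_o$. Non-merged target channel sets are identical in $M_o$ and $M_k$, so they are also satisfied. This establishes that $t_k$ is a valid conversion tree for $M_o$.

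For minimality I would argue by contradiction. Suppose there exists a conversion tree $t_o$ for $M_o$ whose total cost is strictly less than that of $t_k$. I would split on how $t_o$ serves the merged consumers. In the first case, $t_o$ uses a single channel $c$ to cover all the merged $C_{t_i}$; then $c$ has multiple successors in $t_o$, so by condition~\ref{prop:reuse} of the MCT problem $c$ is reusable, and as it lies in every merged $C_{t_i}$ we have $c \in C^{t*}$. Thus $t_o$ is a feasible tree for $M_k$ at cost below $t_k$, contradicting the assumed minimality of $t_k$. In the second case, $t_o$ uses distinct channels from different $C_{t_i}$. Since kernelization applies only when each merged set has at most one non-reusable channel, among the channels $t_o$ uses for the merged consumers at least one, call it $c'$, must be reusable. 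Then $c'$ alone can feed every merged consumer, so the subtrees in $t_o$ that produce the other chosen channels are unnecessary and can be pruned, yielding a strictly cheaper tree and contradicting minimality of $t_o$.

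The main obstacle I anticipate is the pruning argument in the second case: one has to check that cutting the subtrees producing the redundant channels does not damage the tree's service to non-merged consumers or to the retained merged channel $c'$. I would handle this by observing that conversion trees are rooted directed trees, so removing the subtrees hanging off whichever conversions produced the redundant targets leaves the rest of the tree structurally intact, and that those removed subtrees by definition only feed the redundant channels (otherwise their outputs would be reused elsewhere, which would only reinforce the argument). With that, both cases yield contradictions and the lemma follows.
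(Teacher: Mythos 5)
Your proposal is correct and follows essentially the same argument as the paper: feasibility of $t_k$ for $M_o$, then a contradiction split on whether a cheaper tree $t_o$ serves the merged target sets via a single (necessarily reusable) channel or via distinct channels, using the at-most-one-non-reusable-channel property in the second case. You are in fact slightly more explicit than the paper on why the single channel lies in $C^{t*}$ and on why pruning the redundant subtrees is safe, but these are refinements of the same proof, not a different route.
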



\myparagraph{Channel conversion graph exploration}
After kernelizing the original MCT problem, Algorithm~\ref{algorithm:shortest-tree} proceeds to explore the CCG, thereby building the MCT from ``its leaves to the root'':
Intuitively, our algorithm searches -- starting from the root channel $c_r$ -- across the CCG for communication channels that satisfy the target channel sets $\mathscr{C}_t$;
It then backtracks the search paths, thereby incrementally building up the MCT.
The \texttt{traverse} function implements this strategy via recursion -- in other words, each call of this function represents a recursive traversal step through the \ccg.
In summary, the \texttt{traverse} function is composed of three main parts:
(i)~it visits a new channel, checks if it belongs to any target channel set, and potentially creates a partial singleton conversion tree;
(ii)~then it traverses forward, thereby creating {\em partial} MCTs from the currently visited channel to any subset of target channel sets; and
(iii)~it merges the partial MCTs from the steps (i) and (ii) and returns the {\em merged} MCTs.
The algorithm terminates when the partial MCTs form the final MCT.


\begin{algorithm}[t!]\small
	\SetKwComment{Comment}{$\triangleright$\ }{}
	\caption{Recursive traversal of MCT of Algorithm~\ref{algorithm:shortest-tree}.}
	\label{algorithm:traverse}
	\resumenumbering
	\SetKwProg{Fn}{Function}{ }{ }
\KwIn{channel conversion graph $G$, current channel $c$, target channel sets $\mathscr{C}_t$, visited channels $C_v$, satisfied target channel sets $\mathscr{C}_s$}
\KwOut{minimum conversion trees from $c$ to subsets of $\mathscr{C}_t$}
\Fn{\texttt{traverse}($G, c, \mathscr{C}_t, C_v, \mathscr{C}_s$)}{
	$T \leftarrow\texttt{create-dictionary}()$\;
	$\mathscr{C}'_s \leftarrow \{C_{t_i} \in \mathscr{C}_t \mid c \in C_{t_i}\} \setminus \mathscr{C}_s$\;
	\If{$\mathscr{C}'_s \neq \emptyset$}{
		\lForEach{$\mathscr{C}''_s \in 2^{\mathscr{C}'_s} \setminus \emptyset$}{
			$T[\mathscr{C}''_s] \leftarrow \texttt{tree}(c)$
		}
		\lIf{$\mathscr{C}_s \cup \mathscr{C}'_s = \mathscr{C}_t$}{
			\Return{$T$}
		}
	}
	$C_v\leftarrow C_v \cup \{c\}$ \;
	\lIf{$\texttt{reusable}(c)$}{$\mathscr{C}_s \leftarrow\mathscr{C}_s \cup \mathscr{C}'_s$} 
	$\mathcal{T} \leftarrow \emptyset$\;
	\ForEach{$(c\stackrel{o}{\rightarrow}c') \in G$ with $c' \not\in C_v$}{
		$T' \leftarrow \texttt{traverse}(G, c', \mathscr{C}_t, C_v, \mathscr{C}_s)$\;
		$T' \leftarrow \texttt{grow}(T', c\stackrel{o}{\rightarrow}c')$\;
		$\mathcal{T} \leftarrow \mathcal{T} \cup \{T'\}$\;
	}
	\leIf{$\texttt{reusable}(c)$}{
		$\mathit{d} \leftarrow |\mathscr{C}_t| - |\mathscr{C}_s|$
	}{$\mathit{d} \leftarrow 1$}
	\ForEach{$\mathbf{T} \in \texttt{disjoint-combinations}(\mathcal{T}, d)$}{
		$T\leftarrow\texttt{merge-and-update}(\mathbf{T}, T)$
	}
	\Return{$T$}\;
}
\end{algorithm}

We now explain in further detail this \texttt{traverse} function. 
The objective of each recursion step is to build up a dictionary $T$ (Line~5) that associates subsets of the target channel sets, \ie~$\mathscr{C}_s \subseteq \mathscr{C}_t$, with \emph{partial} conversion trees (PCTs) from the currently visited channel to those target channels $\mathscr{C}_s$.
While backtracking from the recursion, these PCTs can then be merged successively until they form the final MCT.
We use the following example to further explain Algorithm~\ref{algorithm:traverse}. 

\vspace{-0.2cm}
\begin{example}\label{example:traverse}
	Assume we are solving the MCT problem in Figure~\ref{figure:channel-conversion-graph},
	\ie~$c_r := \textsf{Stream}$,
	\add{$C_{t_1} := \{\at{DataSet}\}$}, and
	$C_{t_2} := \{ \at{RDD}, \at{CachedRDD} \}$.
	Also, assume that we have already made one recursion step from the \textsf{Stream} to the \textsf{Collection} channel.
	That is, in our current invocation of \texttt{traverse} we visit $c := \textsf{Collection}$, on our current path we have visited only $C_v = \{ \textsf{Stream} \}$ and did not reach any target channel sets, \ie~$\mathscr{C}_s := \emptyset$.
\end{example}
\vspace{-0.2cm}

\vspace{1mm}\noindent
\textit{Visit channel (Lines~6--9).}
The \texttt{traverse} function starts by collecting all so far unsatisfied target channel sets $\mathscr{C}'_s$, that are satisfied by the currently visited channel $c$ (Line~6).
If there is any such target channel set (Line~7), we create a PCT for any combinations of those target channel sets in $\mathscr{C}'_s$ (Line~8).
At this point, these PCTs consist only of $c$ as root node, but will be ``grown'' during backtracking from the recursion.
If we have even satisfied \emph{all} target channel sets on our current traversal path, we can immediately start backtracking (Line~9).
For the Example~\ref{example:traverse}, $c = \textsf{Stream}$ does not satisfy any target channel set, \ie~we get $\mathscr{C}'_s = \emptyset$ and need to continue.

\vspace{1mm}\noindent
\textit{Forward traversal (Lines~10--16).}
In the second phase, the \texttt{traverse} function does the \emph{forward} traversal.
For that purpose, it marks the currently visited channel $c$ as visited;
and if $c$ is reusable \emph{and} satisfies some target channel sets $\mathscr{C}'_s$, it marks those sets also as satisfied (Lines~10--11).
This is important to let the recursion eventually terminate.
Next, the algorithm traverses forward by following all \ccg edges starting at $c$ and leading to an unvisited channel (Lines~13--14).
For the Example~\ref{example:traverse}, we accordingly visit \add{\at{DataSet}}, \at{Broadcast}, \at{RDD}, and \add{\at{CSVFile}}.
Each recursive call yields another dictionary $T'$ of PCTs.
For instance, when invoking $\texttt{traverse}$ on \add{\at{DataSet}, we get $T'[C_{t_1}] = \at{DataSet}$ (a PCT consisting only of \at{DataSet} as root).}
At this point, we add the followed edge to this PCT to ``grow'' it (Line~16) and obtain the PCT \add{$\at{Collection} \rightarrow \at{DataSet}$.}
We store all those ``grown'' PCTs in $\mathcal{T}$.

\vspace{1mm}\noindent
\textit{Merge PCTs (Lines~17--20).}
As a matter of fact, none of the PCTs in $\mathcal{T}$ might have reached all target channel sets.
For instance, the above mentioned PCT \add{$\at{Collection} \rightarrow \at{DataSet}$} is the only one to satisfy $C_{t_1}$, but it does not satisfy $C_{t_2}$.
Thus, the third and final phase of the \texttt{traverse} function merges certain PCTs in $\mathcal{T}$.
%
Specifically, the $\texttt{disjoint-combinations}$ function (Line~18) enumerates all combinations of PCTs in $\mathcal{T}$ that
(i)~originate from different recursive calls of \texttt{traverse};
(ii)~do not overlap in their satisfied target channel sets;
and (iii)~consist of 1 to $d$ different PCTs.
While the former two criteria ensure that we enumerate all combinations of PCTs that may be merged, the third criterion helps us to avoid enumerating \emph{futile} combinations:
When the current channel $c$ is not reusable, it must not have multiple consuming conversion operators, so $d$ is set to 1 (Line~17).
In any other case, any PCT must not have a degree larger than the number of not satisfied target channels sets;
otherwise the enumerated PCTs would overlap in their satisfied target channel sets.
Note that the value of $d$ can be lowered by kernelization, which reduces the number of target channel sets.
For the Example~\ref{example:traverse}, we have \add{four} outgoing conversion edges from $c = \at{Collection}$ but only two non-satisfied target channel sets, namely $C_{t_1}$ and $C_{t_2}$.
As a result, we can avoid merging PCTs from all \add{four} edges \emph{simultaneously}, as the resulting PCT could not be minimal.
Eventually, the \texttt{merge-and-update} function combines the PCTs into a new PCT and, if there is no PCT in $T$ already that reaches the same target channel sets and has lower costs, the new PCT is added to $T$ (Line~19).
Amongst others, we merge the PCTs \add{$\at{Collection} \rightarrow \at{DataSet}$} and $\at{Collection} \rightarrow \at{RDD}$ in our example.
When we backtrack (Line~20), the resulting PCT will be ``grown'' by the edge $\at{Stream} \rightarrow \at{Collection}$ and form the eventual MCT.

\begin{restatable}{theorem}{conversionthm} \label{theorem:conversion}
	Given a channel conversion graph, Algorithm~\ref{algorithm:shortest-tree} finds the minimum conversion tree if it exists.
\end{restatable}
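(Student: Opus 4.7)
The plan is to invoke Lemma~\ref{lemma:kernelization} to reduce to showing that the recursive exploration in Algorithm~\ref{algorithm:traverse}, when launched from the root channel~$c_r$, returns a minimum conversion tree for the (possibly kernelized) problem. Correctness will then follow from strong induction on the quantity $|C \setminus C_v|$, i.e.\ the number of channels not yet forbidden. Termination is immediate because each recursive call (Line~14) is invoked on an unvisited channel~$c'$ and the set $C_v$ is extended by the current channel~$c$ before recursing (Line~10), so the recursion depth is bounded by $|C|$.

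For correctness I would establish the following loop/recursion invariant: at the return of a call \texttt{traverse}$(G, c, \mathscr{C}_t, C_v, \mathscr{C}_s)$, for every non-empty $\mathscr{C}'' \subseteq \mathscr{C}_t \setminus \mathscr{C}_s$ such that some valid conversion tree rooted at~$c$ exists that (i)~uses only channels outside $C_v \setminus \{c\}$, (ii)~satisfies Property~\ref{prop:reuse} of the MCT problem regarding non-reusable channels, and (iii)~newly satisfies exactly the target channel sets $\mathscr{C}''$, the entry $T[\mathscr{C}'']$ stores a PCT of minimum total edge cost among all such trees. Applying the invariant at the top-level call with $c = c_r$, $C_v = \emptyset$, $\mathscr{C}_s = \emptyset$, and $\mathscr{C}'' = \mathscr{C}_t$ yields the MCT, and the algorithm returns $T_{c_r}[\mathscr{C}_t]$ on Line~3 accordingly. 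The base case is the block on Lines~7--9: if $c$ by itself already covers all remaining target sets, then the singleton tree $\texttt{tree}(c)$ is trivially optimal for every subset of $\mathscr{C}'_s$, because any competing tree must at minimum route data to~$c$ and any additional edges incur non-negative cost.

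The inductive step hinges on the forward-traversal and merge phases. By the induction hypothesis, each recursive call on a neighbor~$c'$ (Line~15) returns a dictionary whose entries are per-subset optima over trees living entirely in $G \setminus (C_v \cup \{c\})$; prepending the edge $c \stackrel{o}{\to} c'$ via \texttt{grow} preserves per-subset optimality within the trees that hang off this particular outgoing edge. The merge phase then combines these ``child'' PCTs into trees rooted at~$c$, and \texttt{merge-and-update} retains, for each achieved subset of target channel sets, only the cheapest candidate PCT.

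The main obstacle, and the delicate part of the argument, is to prove that \texttt{disjoint-combinations} enumerates \emph{all} structurally necessary combinations, and that the per-subset greedy retention is loss-free. For completeness, any valid tree rooted at~$c$ decomposes uniquely into subtrees attached to distinct outgoing edges of~$c$: if $c$ is non-reusable, Property~\ref{prop:reuse} forces exactly one successor, matching the branch $d = 1$ on Line~17; if $c$ is reusable, the hanging subtrees must satisfy pairwise disjoint subsets of the not-yet-satisfied target channel sets (any overlap is wasteful and strictly increases cost, hence cannot appear in a minimum tree), and the number of such subtrees is at most $|\mathscr{C}_t| - |\mathscr{C}_s|$, which is the bound $d$ used on Line~18. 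Thus no optimal decomposition is excluded. For the greedy retention, I would use an exchange argument: the cost of a merged tree is the sum of the edge costs of its child subtrees plus already-accounted prefix edges, so replacing any child subtree by the one stored in the corresponding child dictionary entry for the same satisfied subset can only decrease (or equal) the total cost; hence a per-subset minimum at~$c$ is realized by composing per-subset minima of the children, which is exactly what \texttt{merge-and-update} computes. This closes the induction and establishes the theorem.
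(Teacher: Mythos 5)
Your proposal is correct and takes essentially the same route as the paper's proof: both reduce via the kernelization lemma and then argue by induction that the recursive traversal builds per-subset-optimal partial conversion trees, relying on the optimal substructure of MCTs (any subtree of an MCT is itself minimal) and on the completeness of the \texttt{disjoint-combinations} merge phase. The paper phrases the induction over the height of the MCT and states these two facts only briefly, whereas you induct on the number of unvisited channels with an explicit dictionary invariant (also covering termination and the degree bound $d$), but the underlying argument is the same, just spelled out in more detail.
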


\begin{proof}
	As per Lemma~\ref{lemma:kernelization}, the kernelization does not change the solution of an MCT problem, so we proceed to prove the correctness of the graph traversal algorithm -- by induction.
	Let $h$ be the height of the MCT.
	If $h=1$, the conversion tree, which is composed of only a root (cf.~Algorithm~\ref{algorithm:shortest-tree}, Line~8), is always minimal as any conversion operator incurs non-negative costs.
	Assume an MCT of height $h$.
	We prove that our algorithm can output a tree of height $h+1$ that is also minimal.
	When merging PCTs two facts hold:
	(i)~any subtree in the MCT must be an MCT (with its own root), otherwise this subtree has a cheaper alternative and the overall conversion tree cannot be minimal; and
	(ii)~we consider all valid combination of PCTs in the merging phase and hence will not miss out the most efficient combination.
	Thus, given an MCT with height $h$, the tree with height $h+1$ will also be minimal.
\end{proof}

\myparagraph{Complexity and correctness}
Our algorithm solves the MCT problem exactly (see Theorem~\ref{theorem:conversion} below).
This comes at the cost of exponential complexity:
There are $(n - 1)!$ ways to traverse a full \ccg of $n$ channels and we might need to maintain $2^k$ partial trees in the intermediate steps, where $k$ is the number of target channel sets.
However, in practical situations, our algorithm finishes in the order of milliseconds, as the \ccg comprises only tens of channels and is very sparse.
Also, the number of target channel sets $k$ is mostly only 1 or 2 and can often be diminished by the kernelization.
More importantly, our algorithm avoids performance penalties from inferior data movement plans.
However, if it ever runs into performance problems, one may consider making it approximate.
Inspiration could be drawn from existing algorithms for GST~\cite{polylogarithmic-group-steiner,greedy-group-steiner}.
Yet, we evaluate our algorithm's scalability in Section~\ref{section:experiments_indepth} and show that it gracefully scales to a reasonable number of platforms.

\section{Plan Enumeration}
\label{section:enumeration}

The goal of our optimizer is to find the optimal plan, \ie~the plan with the smallest estimated cost.
More precisely, for each inflated operator in an inflated plan, it needs to select one of its alternative execution operators, such that the overall execution cost is minimized.
Finding the optimal plan, however, is challenging because of the exponential size of the search space.
A plan with $n$ operators, each having $k$ execution operators, will lead to $k^n$ possible execution plans.
This number quickly becomes intractable for growing $n$.
For instance, a cross-community Page\-Rank plan, which consists of $n{=}27$ operators, each with $k{=}5$, yields $2,149,056,512$ possible execution plans.
One could apply greedy pruning to reduce the search space significantly.
For example, we could pick only the most cost-efficient execution operators for each inflated operator and prune all plans with other execution operators, but such a greedy approach could not guarantee to find the optimal execution plan, because it neglects data movement and platform start-up costs.

Thus, it is worthwhile to spend a bit more computation time in the optimization process in order to gain significant performance improvements in the task execution.
We take a principled approach to solve this problem:
We define an algebra to formalize the enumeration (Section~\ref{sec:enumeration_algebra}) and propose a lossless pruning technique (Section~\ref{sec:enumeration_pruning}).
We then exploit this algebra and pruning technique to devise an efficient enumeration algorithm (Section~\ref{sec:enumeration_algorithm}).
\add{Intuitively, the plan enumeration process builds execution plans incrementally from an inflated \rheem plan.
It starts with subplans consisting of a single inflated operator and unfolds it with all possible execution operators.
Then, it expands the subplans with their neighboring operators using the algebra until all operators have been unfolded.}

\subsection{Plan Enumeration Algebra}
\label{sec:enumeration_algebra}
Inspired by the relational algebra, we define the plan enumeration search space along with traversal operations algebraically.
This approach enables us to:
(i)~define the enumeration problem in a simple, elegant manner;
(ii)~concisely formalize our enumeration algorithm; and
(iii)~explore design alternatives.
Let us first describe the data structures and operations of our algebra.

\myparagraph{Data structures}
Our enumeration algebra needs only one principal data structure, the \emph{enumeration} $E=(S, SP)$, which comprises a set of \emph{execution subplans} $SP$ for a given \emph{scope} $S$.
\add{The scope is the set of inflated operators that the enumeration has unfolded in the current step, while} 
each subplan \add{contains} execution operators for each inflated operator in $S$, including execution operators for the data movement.
Intuitively, an enumeration can be seen as a relational table whose schema corresponds to its scope and whose tuples correspond to its possible execution subplans.

\begin{figure}[h!]
  \vspace{-0.2cm}
  \includegraphics[scale=0.13]{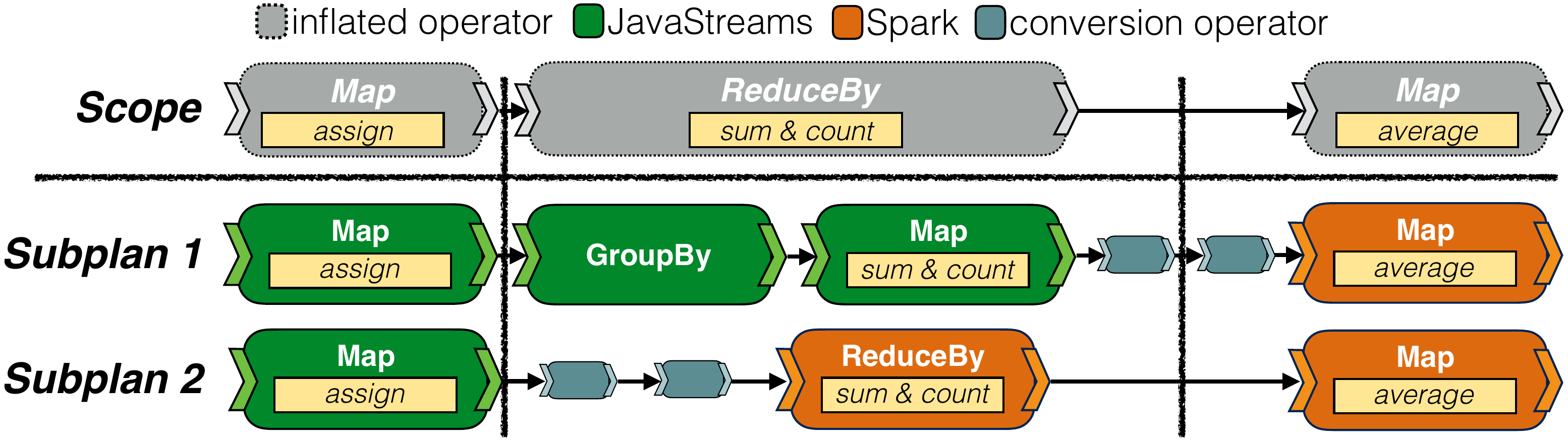}
  \vspace{-0.4cm}
  \caption{An example enumeration with two subplans.}
  \label{figure:enumeration}
  \vspace{-0.2cm}
\end{figure}

\begin{example}[Enumeration]
Figure~\ref{figure:enumeration} depicts an enumeration for the operators from Figure~\ref{figure:enrichment}.
It comprises two subplans for a scope of three inflated operators.
\end{example}
\vspace{-0.2cm}

Notice that if the scope contains all the inflated operators of a \rheem\ plan \add{(\em{complete enumeration})}, then the corresponding subplans form complete execution plans.
This admits the following problem formalization.

\begin{problem2}
Given a \rheem plan, let $E = (S, \textit{SP})$ be its complete enumeration.
The goal is to efficiently identify a subplan $\textit{sp}_k \in \textit{SP}$ such that $\textit{cost}(\textit{sp}_k) \leq \textit{cost}(\textit{sp}_i)$  $\forall \textit{sp}_i \in \textit{SP}$,
where $\textit{cost}(\textit{sp}_i)$ comprises the costs of execution, data movement, and platform initializations of $\textit{sp}_i$.
\end{problem2}

\myparagraph{Algebra operations}
\add{To be able to expand an enumeration with the neighboring operators of its subplans we use our enumeration algebra.}
It comprises two main operations, {\em Join} ($\bowtie$) and {\em Prune} ($\sigma$), both of which allow to manipulate enumerations.
In few words, {\em Join} connects two small enumerations to form a larger one, while {\em Prune} scraps inferior subplans from an enumeration for efficiency reasons.
Let us briefly establish these two operations before detailing how they can be used to enumerate complete execution plans.


\begin{definition}[Join]
Given two disjoint enumerations $E_1=(S_1, SP_1)$ and $E_2=(S_2, SP_2)$ (\ie~$S_1 \cap S_2 = \emptyset$), we define a join $E_1 \bowtie E_2 = (S, SP)$ where $S := S_1 \cup S_2$ and
$SP := \{\texttt{connect}(sp_1, sp_2)\mid sp_1 \in SP_1 \text{ can be connected to } sp_2 \in SP_2 \}$.
The \texttt{connect} function connects $sp_1$ and $sp_2$ by adding conversion operators between operators of the two subplans as explained in Section~\ref{section:datamovement}.

	\label{def:join}
\end{definition}


%

\begin{example}[Merging subplans]
  The enumeration in Figure~\ref{figure:enumeration} could be created by joining an enumeration with scope $S_1 = \{\at{Map} \textit{(``assign'')}, \at{ReduceBy} \textit{(``sum{\&}count'')}\}$ with an enumeration with scope $S_2 = \{\at{Map} \textit{(``average'')}\}$.
  In particular, the \texttt{connect} function adds conversion operators to link the two \at{Maps} in Subplan~1.
\end{example}


\begin{definition}[Prune]
	Given an enumeration $E=(S, SP)$, we define a pruned enumeration $\sigma_\pi(E) := (S, SP')$, where $SP':= \{ sp \in SP \mid sp\text{ satisfies } \pi \}$ and $\pi$ is a configurable pruning criterion.
\end{definition}

\myparagraph{Applying the algebra}
Let us now draft a basic enumeration algorithm based on the \emph{Join} and \emph{Prune} operations.
For each inflated operator~$o$, we create a singleton enumeration $E~=~(\{o\},~SP_o)$, where $SP_o$ are the executable subplans provided by~$o$.
We then join these singleton enumerations one after another to obtain an exhaustive enumeration for the complete \rheem plan.
By pruning the enumerations before joining them, we can drastically reduce the number of intermediate subplans, which comes with according performance benefits.
That being said, this algorithm still lacks two important details, namely a concrete pruning rule and an order for the joins.
We present our choices on these two aspects in the remainder of this section.

\subsection{Lossless Pruning}
\label{sec:enumeration_pruning}
We devise a novel strategy for the \emph{Prune} operation that is {\em lossless}: it will not prune a subplan that is part of the optimal execution plan.
As a result, the optimizer can find the optimal execution plan without an exhaustive enumeration of all execution plans.
Overall, our pruning technique builds upon the notion of {\em \open} operators, which are inflated operators of an enumeration with scope $S$ that are \emph{adjacent} to some inflated operator \emph{outside} of $S$.

\begin{example}[Boundary operators]
  In the scope of the enumeration from Figure~\ref{figure:enumeration},
  \at{Map} (``assign'') and \textsf{Map} (``average'') are \open operators, because they have adjacent operators outside the scope; namely \textsf{RepeatLoop} and \textsf{Map} (``parse'') (cf.\@ Figure~\ref{figure:plan}).
\end{example}

Having explained the boundary operators, we proceed to define our lossless pruning strategy that builds upon them.

\begin{definition}[Lossless Pruning]\label{definition:lossless-pruning}
Let $E=(S, SP)$ be an enumeration and $S_b \subseteq S$ be the set of its \emph{\open} operators.
The lossless pruning removes all $sp\in SP$ for which there is another $sp' \in SP$ that
(i)~contains the same execution operators for all $S_b$ as $sp$,
(ii)~employs the same platforms as $sp$, and
(iii)~has lower cost than $sp$.
\end{definition}

\begin{example}[Lossless Pruning]
For our example enumeration from Figure~\ref{figure:enumeration}, the lossless pruning discards either Subplan~1 or Subplan~2 (whichever has the \add{higher} cost), because
(i)~the two subplans contain the same boundary execution operators (\at{JavaMap} (``assign'') and \at{SparkMap} (``average'')); and (ii)~they need to initialize the same platforms (Java Streams and Spark).
\label{example:pruning}
\end{example}

This pruning technique effectively renders the enumeration a dynamic programming algorithm by establishing the principle of optimality for certain subplans.
Let us now demonstrate that this pruning rule is indeed lossless.

\begin{restatable}{lemma}{pruninglem}\label{lemma:pruning}
	The lossless pruning does not prune a subplan that is contained in the optimal plan {w.r.t.} the cost model.
\end{restatable}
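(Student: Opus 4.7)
The plan is to argue by contradiction. Suppose the optimal execution plan $P^*$ contains a subplan $sp^*$ in some enumeration $E = (S, SP)$ that the lossless pruning would discard. By Definition~\ref{definition:lossless-pruning}, there exists another $sp' \in SP$ that (i)~uses the same execution operators on the boundary $S_b$, (ii)~employs the same set of platforms, and (iii)~has strictly lower cost than $sp^*$. I would then construct a candidate plan $P'$ by replacing $sp^*$ with $sp'$ inside $P^*$, leaving every operator outside $S$ untouched, and show $\mathrm{cost}(P') < \mathrm{cost}(P^*)$ to contradict optimality.

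The first key step is to argue that $P'$ is a valid execution plan. Since $sp^*$ and $sp'$ agree on the boundary execution operators (condition~(i)), every data-flow edge crossing $S$ connects on the boundary side to the same execution operator in both plans. Hence the conversion subtrees that $P^*$ installs at the boundary (via the \texttt{connect} function of Definition~\ref{def:join} and the MCT algorithm of Section~\ref{section:datamovement}) remain applicable unchanged when $sp^*$ is swapped for $sp'$.

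The second step is the cost comparison. I would decompose $\mathrm{cost}(\cdot)$ into execution, data-movement, and platform-initialization components. Execution cost of operators outside $S$ is identical for $P^*$ and $P'$, while inside $S$ it is strictly smaller for $sp'$ by condition~(iii) (this condition already absorbs the \emph{internal} data-movement costs of the subplan). Cross-boundary data movement is determined purely by the pair of boundary execution operators and the flowing cardinalities, none of which change under the swap, so this component is identical in $P^*$ and $P'$. Finally, platform initialization is paid once globally per platform; condition~(ii) guarantees that the set of platforms used inside $S$ is the same in $sp^*$ and $sp'$, so the global set of initialized platforms is unchanged. Summing the three components yields $\mathrm{cost}(P') < \mathrm{cost}(P^*)$, contradicting the optimality of $P^*$.

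The main obstacle, I expect, is the cross-boundary data-movement argument: one must be explicit that the conversion tree attached to a boundary edge is a function only of its two endpoint execution operators and the estimated cardinality, and therefore is invariant under a substitution that preserves the boundary. A secondary subtlety is that platform initialization is a plan-global rather than subplan-local cost, so without condition~(ii) the swap could either introduce a new platform (raising cost elsewhere) or eliminate one that is still needed outside $S$; condition~(ii) sidesteps both issues cleanly and is what makes the pruning criterion tight enough to be lossless yet restrictive enough to be effective.
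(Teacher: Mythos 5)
Your proposal is correct and follows essentially the same route as the paper's proof: a contradiction argument that swaps the pruned subplan for the cheaper one sharing the same \open{} operators and platforms, using the shared boundary to preserve validity of the surrounding data-movement operations and the lower subplan cost to contradict optimality. Your explicit three-way cost decomposition (execution, data movement, platform initialization) merely spells out details, in particular the role of condition~(ii) for plan-global start-up costs, that the paper's proof leaves implicit.
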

\begin{proof}
	We prove this lemma by contradiction.
	Consider an enumeration $E = (S, \textit{SP})$ and two execution subplans $\textit{sp}_1, \textit{sp}_2\in\textit{SP}$.
	Let us assume that both subplans share the same \open operators and use the same platforms but $\textit{sp}'$ has a lower cost than $\textit{sp}$, so that our pruning removes $\textit{sp}$.
	Now assume that the subplan $\textit{sp}$ is contained in the optimal plan $p$.
	If we exchange $\textit{sp}$ with $\textit{sp}'$, we obtain a new plan $p'$.
	This plan is valid because $\textit{sp}$ and $\textit{sp}'$ have the same \open operators, so that any data movement operations between $\textit{sp}$ with any adjacent operators in $p$ are also valid for $\textit{sp}'$.
	Furthermore, $p'$ is more efficient than $p$ because the costs for $\textit{sp}'$ are lower than for $\textit{sp}$ and besides those subplans, $p$ and $p'$ have the exact same operators and costs.
	This contradicts the assumption that $p$ is optimal.
\end{proof}

\subsection{Enumeration algorithm}
\label{sec:enumeration_algorithm}
\add{Using the previously described enumeration algebra and the lossless pruning strategy we construct our enumeration algorithm.}
Algorithm~\ref{algorithm:enumeration} shows the algorithm.
Given an inflated \rheem plan, we first create a singleton enumeration for each inflated operator (Line~1).
We then need to repeatedly join and prune these enumerations to obtain the optimal execution plan.
However, we aim at maximizing the pruning effectiveness by choosing a good order to join the enumerations.
Thus, we first identify \emph{join groups}~(Line~2).
A join group indicates a set of plan enumerations to be joined.
Initially, we create a join group for each inflated operator's output, so that each join group contains
(i)~the enumeration for the operator with that output, $E_\text{out}$, and
(ii)~the enumerations for all inflated operators that consume that output as input, $E_\text{in}^i$.
For instance in the inflated plan of Figure~\ref{figure:plan}, the enumerations for \at{Map} (``assign'') and \at{ReduceBy} (``sum \& count'') form an initial join group.
While the join order is not relevant to the correctness of the enumeration algorithm, joining only adjacent enumerations is beneficial to performance: It minimizes the number of \open operators in the resulting enumeration, which in turn makes our lossless pruning most effective (see Definition~\ref{definition:lossless-pruning}, Criterion~(i)).
To further promote this effect, we order the join groups ascending by the number of \open operators~(Line~3).
Then, we greedily poll the join groups from the queue, execute the corresponding join, and prune the join product~(Lines~4--6).
Also, in any other join group that includes one of the joined enumerations, \ie~$E_\text{out}$ or any $E_\text{in}^i$, we need to replace those joined enumerations with the join product $E_{\bowtie}$~(Lines~7--9).
Note that these changes make it necessary to re-order the affected join products in the priority queue~(Line~10).
Eventually, the last join product is a full enumeration for the complete \rheem plan.
Its lowest cost subplan is the optimal execution plan (Line~11).

\begin{algorithm}[t!]\small
	\caption{\rheem plan enumeration}
	\label{algorithm:enumeration}
	\KwIn{\rheem\ inflated plan $R$}
	\KwOut{Optimal execution plan $sp_\text{min}$}

  $\mathcal{E} \leftarrow\big\{(\{o\}, \textit{SP}_o): o \text{ is an inflated operator} \in R \big\}$ \;
  $\textit{joinGroups} \leftarrow \texttt{find-join-groups}(\mathcal{E})$ \;
  $\textit{queue} \leftarrow \texttt{create-priority-queue}(\textit{joinGroups})$ \;
  \While{$|\textit{queue}| > 0$}{
    $\textit{joinGroup} = \{E_\text{out}, E_\text{in}^1, E_\text{in}^2, \dots \} \leftarrow \texttt{poll}(\textit{queue})$ \;
    $E_{\bowtie} \leftarrow \sigma(E_\text{out} \bowtie E_\text{in}^1 \bowtie  E_\text{in}^2 \bowtie \dots)$ \;
    \ForEach{$\textit{joinGroup}' \in \textit{queue}$}{
      \If{$\textit{joinGroup} \cap \textit{joinGroup}' \neq \emptyset$}{
        \texttt{update}($\textit{joinGroup}'$\textsf{ with }$E_{\bowtie})$ \;
        \texttt{re-order}($\textit{joinGroup}$\textsf{ in }$\textit{queue})$\;
      }
    }
  }
  $\textit{sp}_\text{min} \leftarrow \text{the subplan in }E_{\bowtie}\text{ with the lowest cost}$ \;
\end{algorithm}

\add{It is worth noting that our algorithm has been inspired by classical database optimizers~\cite{selinger1979access} with the difference that the problem we are solving is not operator re-ordering but rather choosing execution operators in a plan. For this reason, we do not opt for a top-down or bottom-up approach but rather exploit the entire search space simultaneously. 
}
\add{In addition,}  our lossless pruning is related to the concept of \emph{interesting sites}~\cite{kossmann2000iterative} in distributed relational query optimization, especially to the \emph{interesting properties}~\cite{selinger1979access} in general.
We can easily extend our pruning rule to account for properties other than \open operators.
For example, we already do consider platform start-up costs in our cost model
(see the plan enumeration problem statement in Section~\ref{sec:enumeration_algebra}).
As a result, we avoid pruning subplans with start-up costs that might be redeemed over the whole plan.
Let us now establish the correctness of our enumeration algorithm.


\begin{theorem}\label{theorem:enumeration}
The enumeration Algorithm~\ref{algorithm:enumeration} determines the optimal execution plan \wrt~the cost estimates.
\end{theorem}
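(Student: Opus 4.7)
The plan is to prove Theorem~\ref{theorem:enumeration} by induction on the number of join operations performed, showing that at every step the currently held enumerations collectively still admit the globally optimal execution plan as a combination of their subplans. The final enumeration is then complete (its scope equals all inflated operators), so its minimum-cost subplan must be the optimal execution plan.

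First, I would establish a simple structural fact about the \emph{Join} operator from Definition~\ref{def:join}: it is exhaustive in the sense that $SP_{E_1 \bowtie E_2}$ contains every valid way of connecting an $sp_1 \in SP_1$ to an $sp_2 \in SP_2$ via conversion operators, and the data-movement insertion is itself optimal by Theorem~\ref{theorem:conversion}. Together with the observation that the scopes partition the inflated operators throughout execution, this guarantees that if $E_\text{out}, E_\text{in}^1, E_\text{in}^2, \dots$ hold the subplans that compose the optimal plan~$p$ restricted to their respective scopes, then $E_\text{out} \bowtie E_\text{in}^1 \bowtie \cdots$ contains the restriction of $p$ to the union of those scopes. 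I would also note that which join order is chosen does not affect the set of eventually reachable full execution plans, since every execution plan decomposes into subplans along any partitioning of operators consistent with the join-group structure.

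Next comes the induction. The base case is immediate: each singleton enumeration $(\{o\}, SP_o)$ created in Line~1 enumerates all execution operators for~$o$, so in particular it contains the execution operator that $p$ uses at~$o$. For the inductive step, assume that before a join step every surviving enumeration $E_i$ contains a subplan $sp_i^{\star}$ equal to the restriction of~$p$ to the scope of~$E_i$. When the algorithm polls a join group $\{E_\text{out}, E_\text{in}^1, \ldots\}$ and forms $E_{\bowtie} = E_\text{out} \bowtie E_\text{in}^1 \bowtie \cdots$, the exhaustiveness of \emph{Join} plus optimality of MCT-based data movement ensure that $SP_{E_{\bowtie}}$ contains $\texttt{connect}(sp_\text{out}^{\star}, sp_\text{in}^{1\star}, \ldots)$, which is exactly the restriction of $p$ to $E_{\bowtie}$'s scope. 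Applying Lemma~\ref{lemma:pruning}, the subsequent \emph{Prune} step does not discard this subplan, since it is part of the globally optimal plan~$p$. Thus the inductive invariant is maintained.

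When the queue is empty, only one enumeration remains, and its scope is the full set of inflated operators; by the invariant, its subplan set contains~$p$. Since $p$ is optimal with respect to the cost model and the algorithm returns the minimum-cost subplan in $E_{\bowtie}$, the returned plan $sp_\text{min}$ has cost at most that of~$p$, hence equals~$p$ in cost. The main obstacle I anticipate is carefully handling the ``combinatorial bookkeeping'' of the inductive step: one must argue that as join groups are updated in Lines~7--10 after each merge, the partition of inflated operators across surviving enumerations stays consistent, so that the decomposition of~$p$ into per-enumeration restrictions is always well-defined. This essentially follows from the fact that every \texttt{update} step replaces $E_\text{out}$ and the $E_\text{in}^i$ with their join product $E_{\bowtie}$, preserving that the collection of scopes partitions the inflated operators of $R$.
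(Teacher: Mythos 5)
Your proposal is correct and follows essentially the same route as the paper's (much terser) proof: the enumeration is exhaustive because the \emph{Join} operation preserves all ways of combining subplans, and Lemma~\ref{lemma:pruning} guarantees the pruning never discards a subplan of the optimal plan, so the optimal plan survives to the final complete enumeration. Your induction on join steps is simply a more detailed write-up of this same argument, not a different approach.
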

\begin{proof}
As Algorithm~\ref{algorithm:enumeration} applies a lossless pruning technique (as per Lemma~\ref{lemma:pruning}) to an otherwise \emph{exhaustive} plan enumeration, it detects the optimal execution plan. 
\end{proof}

\section{Dealing with Uncertainty}
\label{section:progressive}


As cross-platform settings are characterized by high uncertainty, \eg~the semantics of UDFs are usually unknown, data cardinalities can be imprecise.
This harms the optimizer~\cite{howgoodoptimizersare}.
Although our optimizer allows users to supplement valuable optimization information, such as UDF selectivities, users might not always be willing or able to specify them.
Hence, the optimizer might choose suboptimal plans.


To mitigate the effects of bad cardinality estimates, our optimizer also performs \emph{progressive query optimization}~\cite{markl04}.
The key principle is to monitor actual cardinalities of an execution plan and re-optimize the plan on the fly in case of poor cardinality estimates.
Progressive query optimization in cross-platform settings is challenging for two reasons.
First, we have only limited control over the underlying platforms, which makes plan instrumentation and halting executions difficult.
Second, re-optimizing an ongoing execution plan must efficiently consider the results already produced.

Our optimizer tackles the above challenges as follows.
It first inserts \emph{optimization checkpoints} into execution plans.
An optimization checkpoint is basically a request for re-optimization before proceeding beyond it.
The optimizer inserts these checkpoints between two execution operators whenever (i)~cardinality estimates are uncertain (\ie~having a wide interval or low confidence) and (ii)~the data is at rest (\eg~a Java collection or a file).
Before execution, the optimizer asks the drivers of the involved platforms to collect the actual cardinalities of their intermediate data structures.
The execution plan is then executed until the optimization checkpoints.
Every time an optimization checkpoint is reached, the optimizer checks if the actual cardinalities considerably mismatch the estimated ones.
If so, it re-optimizes (as explained in previous sections) the plan under consideration with the updated cardinalities and already executed operators.
Once this is done, the \rheem executor simply resumes the execution with the re-optimized plan.
This yields a progressive optimization that always uses the latest statistics.




\section{Experiments}
\label{section:experiments}

Our optimizer is part of \rheem, our open-source cross-platform system\footnote{\small{\url{https://github.com/rheem-ecosystem/rheem}}}.
For the sake of simplicity, we henceforth refer to our optimizer simply as \rheem.
We have carried out several experiments to evaluate the effectiveness and efficiency of our optimizer.
As our work is the first to provide a complete cross-platform optimization framework, we compared it vis-a-vis individual platforms and common practices.
For a system-level comparison please refer to~\cite{rheem-system-vldb18}.

We evaluate our optimizer by answering the following questions.
Can our optimizer enable \rheem to:
{\em choose the best platform for a given task?} (Section~\ref{section:experiments_independence});
{\em spot hidden opportunities for cross-platform processing that improve performance?} as well as
{\em perform well in a data lake setting?} (Section~\ref{section:experiments_multiple}).
These are in fact the three most common situations in which an application needs support for cross-platform data processing~\cite{rheem-tutorial}.
Lastly, we also evaluate the scalability and design choices of our optimizer (Sections~\ref{section:experiments_indepth} and~\ref{section:extraexps}).


\subsection{Setup}
\label{section:experiments_setup}

\myparagraph{Hardware}
We ran all our experiments on a cluster of 10 machines.
Each node has one $2$\,GHz Quad Core Xeon processor, $32$\,GB main memory, $500$\,GB SATA hard disks, a $1$\,Gigabit network card and runs $64$-bit platform Linux Ubuntu 14\@.04\@.05.

\myparagraph{Processing \& storage platforms}
We considered the following platforms:
Java's Streams (\textsf{JavaStreams}),
PostgreSQL~9.6.2 (\textsf{PSQL}),
Spark~1.6.0 (\textsf{Spark}),
Flink~1.3.2 (\textsf{Flink}),
GraphX~1.6.0 (\textsf{GraphX}),
Giraph~1.2.0 (\textsf{Giraph}),
a simple self-written Java graph library (\textsf{JGraph}), and
HDFS~2.6.0 to store files.
We used all these with their default settings and set the RAM of each platform to $20$\,GB.



\begin{figure*}[!t]
	\centering
	\includegraphics[scale=0.3]{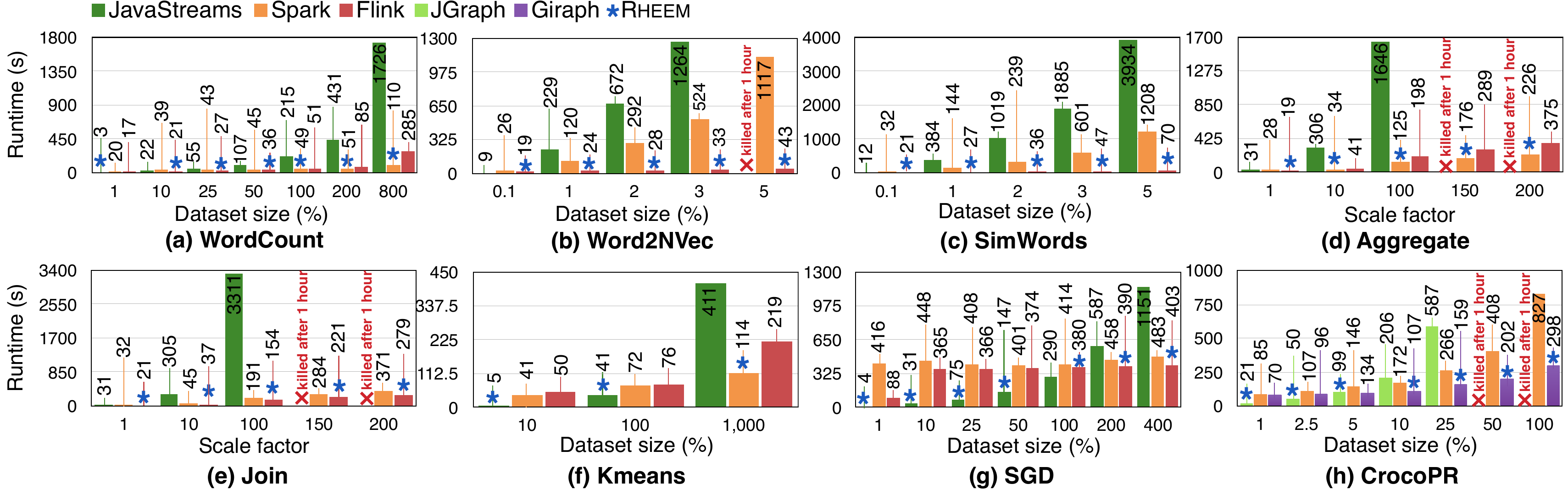}
	\vspace{-0.4cm}
	\caption{Platform independence: \rheem avoids all worst execution cases and chooses the best platform for almost all tasks.}
	\label{figure:platformindependence}
	\vspace{-0.3cm}
\end{figure*}

\myparagraph{Tasks and datasets}
We have considered a broad range of data analytics tasks from different areas, namely text mining (TM), relational analytics (RA), machine learning (ML), and graph mining (GM).
Details on the datasets and tasks are shown in Table~\ref{tab:datasets}.
These tasks and datasets individually highlight different features of \rheem and together demonstrate its general applicability.
Note that, to allow \rheem to choose most of the available platforms, all tasks' input datasets (with the exception of \task{Polystore}'s) are stored on HDFS (except when specified otherwise).
To challenge \rheem, we focused primarily on medium-sized datasets, so that \add{platform choices are not so obvious. Considering very large datasets would not yield very interesting insights: \eg~\pl{JavaStreams} or \pl{Postgres} could be easily excluded from the viable platform choices making the problem easier.}
Nonetheless, \rheem scales to large datasets when provided with scalable processing platforms.
\add{
	To learn the operators cost we first generated a number of execution logs using all tasks in Table~\ref{tab:datasets} with varying input dataset sizes and
	then used a genetic algorithm to learn the cost from these logs. 
}
Note that all the numbers we report are the average of three runs.

\begin{table}[ht]
	\centering
	\caption{Tasks and datasets.\label{tab:datasets}}
	\vspace{-0.3cm}
	\scalebox{0.7}{
		\begin{tabular}{lp{3cm}p{2.7cm}p{2cm}}
			\hline
			\textbf{Task} & \textbf{Description} & \textbf{Dataset} & \textbf{Default store} \\
			\hline
			\task{WordCount} (TM) & count distinct words & \ds{Wikipedia} \hspace{1cm} \ds{abstracts} ($3$GB) & \pl{HDFS}\\
			\task{Word2NVec} (TM) & word neighborhood vectors & \ds{Wikipedia} \hspace{1cm} \ds{abstracts} ($3$GB) & \pl{HDFS}\\
			\task{SimWords} (TM) & word neighborhood clustering & \ds{Wikipedia} \hspace{1cm} \ds{abstracts} ($3$GB) & \pl{HDFS}\\
		    \hline
 		    \task{Aggregate} (RA) & aggregate query (TPC-H Q1) & \ds{TPC-H} ($1$-$100$GB) & \pl{HDFS}\\
 		    \task{Join} (RA) & 2-way join (TPC-H Q3) & \ds{TPC-H} ($1$-$100$GB) & \pl{HDFS}\\
 		    \task{PolyJoin} (RA) & n-way join (TPC-H Q5) & \ds{TPC-H} ($1$-$100$GB) & \pl{Postgres}, \pl{HDFS}, \pl{LFS}\\
			\hline
			\task{Kmeans} (ML) & clustering & \ds{USCensus1990} ($361$MB) & \pl{HDFS}\\
			\task{SGD} (ML) & stochastic gradient descent & \ds{HIGGS} ($7.4$GB) & \pl{HDFS}\\
			\hline
			\task{CrocoPR} (GM) & cross-community pagerank & \ds{DBpedia pagelinks} ($20$GB) & \pl{HDFS}\\
			\hline
		\end{tabular}
	}
	\vspace{-0.2cm}
\end{table}


\begin{figure*}[!t]
	\centering
	\includegraphics[scale=0.295]{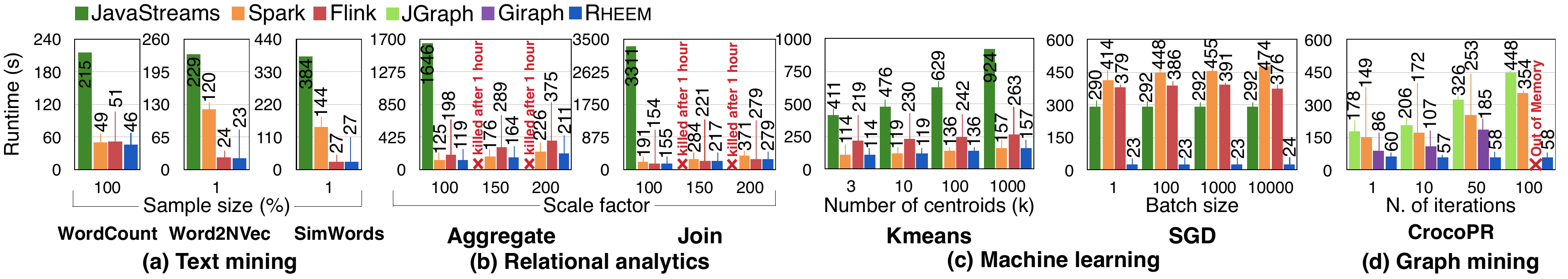}
	\vspace{-0.7cm}
	\caption{Opportunistic cross-platform: \rheem improves performance by combining multiple data processing platforms.}
	\label{figure:opportunistic}
	\vspace{-0.2cm}
\end{figure*}

\subsection{Single-Platform Optimization}
\label{section:experiments_independence}

We start our experiments by evaluating how well \rheem selects a single data processing platform to execute a given task.

\myparagraph{Experiment setup}
For this experiment, we forced \rheem to use a single platform when executing a task.
Then, we checked if our optimizer chose the one with the best runtime.
We ran all the tasks of Table~\ref{tab:datasets} with increasing dataset sizes. \add{For the real-world datasets, we take samples from the initial dataset of increasing size. To further stress the optimizer, for some tasks we increase the input datasets by replicating them.}
Note that we do not run \task{PolyJoin} as it cannot be performed using a single platform.
The iterations for \task{CrocoPR}, \task{K-means}, and \task{SGD} are $10$, $100$, and $1,000$, respectively.


\myparagraph{Results}
Figure~\ref{figure:platformindependence} shows the execution times for all our data analytic tasks and for increasing dataset sizes\footnote{For the non-synthetic datasets, we created samples of increasing size.}.
The stars denote the platform selected by our optimizer.
First of all, let us stress that the results show significant differences in the runtimes of the different platforms: even between \pl{Spark} and \pl{Flink}, which are big data platform competitors.
For example, \pl{Flink} can be up to $26$x faster than \pl{Spark} and \pl{Spark} can be twice faster than \pl{Flink} for the tasks we considered in our evaluation.
Therefore, it is crucial for an optimizer to prevent tasks from falling into such non-obvious worst cases.
The results, in Figure~\ref{figure:platformindependence}, show that our optimizer indeed makes robust platform choices whenever runtimes differ substantially.
This effectiveness of the optimizer for choosing the right platform transparently prevents applications from using suboptimal platforms.
For instance, it prevents running:
(i)~\task{Word2NVec} and \task{SimWords} on \textsf{Spark} for 5\% of its input dataset.
Spark performs worse than \textsf{Flink} for \task{Word2NVec} and \task{SimWords} because it employs only $2$ compute nodes (one for each input data partition), while \textsf{Flink} uses all $10$\add{\footnote{\add{One might think of re-partitioning the data for Spark, but such an optimization is the responsibility of the platform itself based on our three-layer optimization vision~\cite{rheem-vision-edbt16}.}}};
(ii)~\task{SimWords} on \textsf{Java} for 1\% of its input dataset ($\sim\!30$MB);
as \task{SimWords} performs many CPU-intensive vector operations, using \textsf{JavaStreams} (\ie~a single compute node) simply slows down the entire process;
(iii)~\task{WordCount} on \textsf{Flink} for 800\% of its input dataset (\ie~$24$GB), where, in contrast to \pl{Spark}, \pl{Flink} suffers from a slower data reduce mechanism\footnote{\pl{Flink} uses a sorting-based aggregation, which -- in this case -- appears to be inferior to \pl{Spark}'s hash-based aggregation.} ; and
(iv)~\task{CrocoPR} on \textsf{JGraph} for more than 10\% of its input dataset as it simply cannot efficiently process large datasets.
We also observe that \rheem generally chooses the right platform even for the difficult cases where the execution times are quite similar on different platforms.
For example, it always selects the right platform for \task{Aggregate} and \task{Join} even if the execution times for \textsf{Spark} and \textsf{Flink} are quite close to each other.
Only in few of these difficult cases the optimizer fails to choose the best platform, such as in \task{Word2NVec} and \task{SimWords} for $0.1$\% of input data.
This is because the accuracy of our optimizer is very sensitive to uncertainty factors, such as cost model calibration and cardinality estimates.
These factors are also quite challenging to estimate even for controlled settings, such as in databases.
Still, despite these two cases, all these results allow us to conclude that {\em our optimizer chooses the best platform for almost all tasks and it prevents tasks from falling into worst execution cases}.


\subsection{Multi-Platform Optimization}
\label{section:experiments_multiple}
We now study the efficiency of our optimizer when using multiple platforms for a single task.
We evaluate:
(i)~if it can spot hidden opportunities for the use of multiple platforms ({\em opportunistic} cross-platform); 
(ii)~when data is dispersed across several store platforms and no processing platform can directly access all the data ({\em polystore}); and
(iii)~the effectiveness to complement the functionalities of disparate processing platforms ({\em mandatory} cross-platform).


\myparagraph{Experiment setup (opportunistic)}
We re-enable \rheem to use any platform combination.
For the opportunistic cross-platform experiment, we use the same tasks and datasets with three differences: we ran
(i)~\task{Kmeans} on $10$x its entire dataset for a varying number of centroids,
(ii)~\task{SGD} on its entire dataset for increasing batch sizes, and
(iii)~\task{CrocoPR} on 10\% of its input dataset for a varying number of iterations.

\begin{table}[t!]
	\centering
	\caption{Opportunistic cross-platform breakdown.\label{tab:breakdown}}
	\vspace{-0.3cm}
	\scalebox{0.7}{
		\begin{tabular}{p{4.5cm} c c}
		\hline
		\textbf{Task} & \textbf{Selected Platforms} & \textbf{Data Transfer/Ite.} \\
		\hline
		\task{WordCount} & Spark, JavaStreams & $\sim\!82$ MB \\
		\task{Word2NVec} & Flink & -- \\
		\task{SimWords} & Flink & -- \\
		\hline
		\task{Aggregate (all scale factors)} & Flink, Spark & $\sim\!23$\% of the input \\
		\task{Join (all scale factors)} & Flink & -- \\
		\hline
		\task{Kmeans (k=3; k=10)} & Spark & -- \\
		\task{Kmeans (k=100, k=1,000)} & Spark, JavaStreams & $\sim\!6$ KB \& $\sim\!60$ KB resp. \\
		\task{SGD (all batch sizes)} & Spark, JavaStreams & $\sim\!0.14$ KB $\times$ batch size \\
		\hline
		\task{CrocoPR (all nb. of ite.)} & Flink, JGraph, JavaStreams & $\sim\!544$ MB \\
		\hline
		\end{tabular}
	}
	\vspace{-0.3cm}
\end{table}

\myparagraph{Results (opportunistic)}
Figure~\ref{figure:opportunistic} shows the results for these experiments.
Overall, we find that in the worst case \rheem matches the performance of any single platform execution,
but in several cases considerably improves over single-platform executions.
We observe it to be up to $20\times$ faster than \pl{Spark}, up to $17\times$ faster than \pl{Flink}, up to $21\times$ faster than \pl{JavaStreams}, up to $6\times$ faster than \pl{Giraph}.
There are several reasons for having this large improvement.
\add{Table~\ref{tab:breakdown} illustrates the platform choices as well as the cross-platform data transfer per iteration that our optimizer did for all our tasks.}

In detail, for \task{SGD}, \rheem decided to handle the model parameters, \add{which is typically tiny ($\sim\!\!0.1$KB for our input dataset)}, with \pl{JavaStreams} while it processed the data points (typically a large dataset) with \pl{Spark}.
For \task{CrocoPR}, surprisingly our optimizer uses a combination of \pl{Flink}, \pl{JGraph}, and \pl{JavaStreams}, even if \pl{Giraph} is the fastest baseline platform.
This is because after the preparation phase of this task, the input dataset for the \at{PageRank} operation on \pl{JGraph} is $\sim\!\!544$ MB only.
For \task{WordCount}, \rheem \add{ surprisingly} detected that moving the result data \add{($\sim\!82$ MB)} from \pl{Spark} to \pl{JavaStreams} and afterwards shipping it to the driver application is slightly faster than \pl{Spark} (which is the fastest baseline platform for this task).
This is because when moving data to \pl{JavaStreams} \rheem uses the action \textsf{Rdd.collect()}, which is more efficient than the \pl{Rdd.toLocalIterator()} operation that \pl{Spark} uses to move data to the driver.
For \task{Aggregate}, our optimizer selects \pl{Flink} and \pl{Spark}, which allows it to run this task slightly faster than the fastest baseline platform, which is \pl{Spark} for this task.
Our optimizer achieves this improvement by (i)~exploiting the fast stream data processing mechanism native in \pl{Flink} for the projection and selection operations, and (ii)~avoiding the slow data reduce mechanism of \pl{Flink} by using \pl{Spark} for the \at{ReduceBy} operation.
\add{Note that, in contrast to all previous tasks, \rheem can afford to transfer $\sim\!\!23$\% of the input data because it uses two big data platforms for processing this task.}
All these are surprising results per-se.
They show not only that \rheem outperforms state-of-the-art platforms by using combinations of them, but also that it can spot hidden opportunities for cross-platform execution.

To further stress the importance of finding hidden cross-platform execution opportunities, we ran a subquery (\task{JoinX}) of \task{PolyJoin}:
This query joins the relations {\tt SUPPLIER} and {\tt CUSTOMER} (which are stored on \pl{Postgres}) on the attribute {\tt nationkey} and aggregates the join results on the same attribute.
For this additional experiment, we compare \rheem with the execution of \task{JoinX} on \pl{Postgres}, which is the obvious platform to run this kind of queries.
\begin{wrapfigure}{lh}{0.13\textwidth}
	\vspace{-0.4cm}
	\centering
	\includegraphics[width=0.16\textwidth]{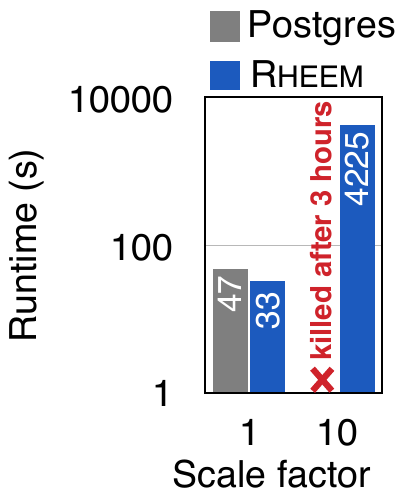}
	\vspace{-0.8cm}
	\caption{JoinX}
	\label{figure:joinx}
	\vspace{-0.4cm}
\end{wrapfigure}
\noindent
The results are displayed in Figure~\ref{figure:joinx}.
Remarkably, we observe that \rheem significantly outperforms \pl{Postgres}, even though the input data is stored there.
In fact, \rheem is more than twice as fast as \pl{Postgres} for a scale factor of $10$.
This is because it simply pushes down the projection operation into \pl{Postgres} and then moves the data into \pl{Spark} to perform the join and aggregation operations, thereby leveraging the parallelism offered by \pl{Spark}.
We thus do confirm that {\em our optimizer is indeed able to identify hidden opportunities to improve performance as well as to perform much more robustly by using multiple platforms}.

%



We now evaluate the efficiency of \rheem in polystore settings, where the input datasets are dispersed across several systems.

\myparagraph{Experiment setup (polystore)}
We consider the \task{PolyJoin} task, which takes the {\tt CUSTOMER}, {\tt LINEITEM}, {\tt NATION}, {\tt ORDERS}, {\tt REGION}, and {\tt SUPPLIER} TPC-H tables as input.
We stored the large \texttt{LINEITEM} and \texttt{ORDERS} tables in \pl{HDFS}, the \texttt{CUSTOMER}, \texttt{REGION}, and \texttt{SUPPLIER} tables in \pl{Postgres}, and the \texttt{NATION} table in a local file system (\pl{LFS}).
In this scenario, the common practice is either to move the data into a relational database in order to enact the analytical queries inside the database~\cite{polybase,oraclehadoop} or move the data entirely to HDFS and use Spark.
Therefore, we consider these two cases as the baselines.
We measure the data migration time as well as the query execution time as the total runtime for these baselines.
\rheem processes the input datasets directly on the data stores where they reside \add{and moves data if necessary}.
For a fair comparison in this experiment, we set the {\em parallel query} and {\em effective IO concurrency} features of \pl{Postgres} to $4$.


\myparagraph{Results (polystore)}
Figure~\ref{fig:polymandatory}(a) shows the results:
\rheem is significantly faster, namely up to $5\times$, \add{than moving data into \pl{Postgres} and run the query there}.
In particular, we observed that loading data into \pl{Postgres} is already approximately $3\times$ slower than it takes \rheem to complete the entire task.
Even when discarding data migration times, \rheem performs quite similarly to \pl{Postgres}.
This is because, as our optimizer chooses to run this task on Spark, \rheem can parallelize most part of the task execution, leading to performance speedup.
For example, the pure execution time in \pl{Postgres} for a scale factor of $100$ amounts to $1,541$ seconds compared to $1,608$ seconds for \rheem.
We also observe that our optimizer has negligible overhead over the case when the developer writes ad-hoc scripts to move the data to HDFS for running the task on \pl{Spark}.
In particular, \rheem is twice faster than \pl{Spark} for scale factor $1$ because it moves less data from \pl{Postgres} to \pl{Spark}.
This shows the {\em substantial benefits of our optimizer in polystore scenarios, not only in terms of performance but also in terms of ease-of-use, as users do not write ad-hoc scripts anymore to integrate different platforms.}

\begin{figure}[!t]
	\centering
	\includegraphics[scale=0.4]{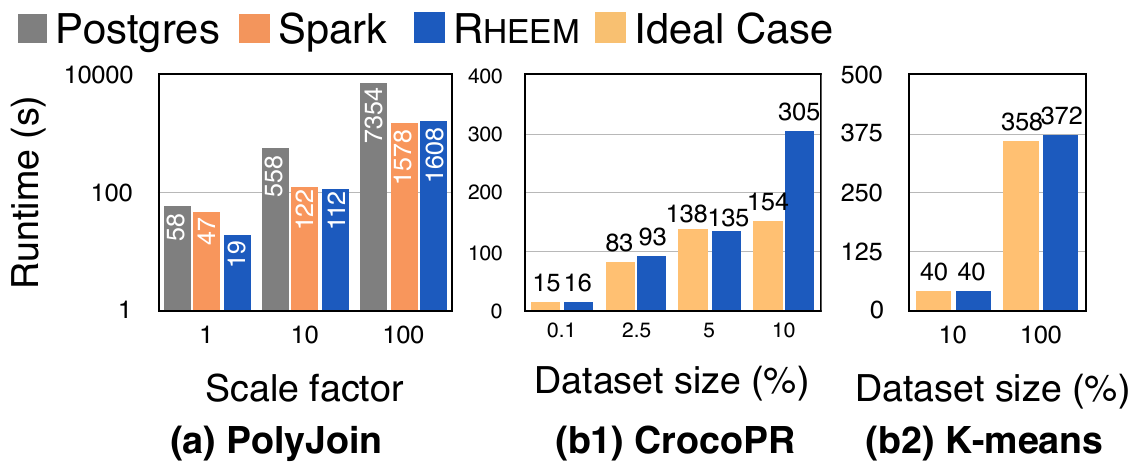}
	\vspace{-0.3cm}
	\caption{{\bf (a)}~Polystore \& {\bf (b)}~Mandatory multi-platform.}
	\label{fig:polymandatory}
	\vspace{-0.3cm}
\end{figure}


\myparagraph{Experiment setup (mandatory)}
To evaluate this feature, we consider the \task{CrocoPR} and \task{Kmeans} tasks.
In contrast to previous experiments, we assume both input datasets (\ds{DBpedia} and \ds{USCensus1990}) to be on \pl{Postgres}.
As the implementation of these tasks on \task{Postgres} would be very impractical and of utterly inferior performance, it is important to move the computation to a different processing platform.
In these experiments, we consider as baseline the \emph{ideal case} where the data resides in the HDFS instead and \rheem uses either \pl{JavaStreams} or \pl{Spark} to run the tasks.

\myparagraph{Results (mandatory)}
Figure~\ref{fig:polymandatory}(b) shows the results.
We observe that \rheem achieves similar performance with the ideal case in almost all scenarios.
This is a remarkable result, as it needs to move data out of \pl{Postgres} to a different processing platform, in contrast to the ideal case.
Only for \task{CrocoPR} and only for the largest dataset we measured a slow-down of \rheem \wrt~the ideal case, which is because \rheem reads data in parallel in the ideal case (\ie~when reading from \pl{HDFS}), which is not possible when reading from \pl{Postgres}.
Nevertheless, it is more efficient (and practical) than writing ad-hoc scripts to move data out of \pl{Postgres} and running the task on a different platform.
In particular, we observed that the optimizer \add{chooses to perform projections and selections in \textsf{Postgres}}, and thus reduces the amount of data to be moved.
These results show that {\em our optimizer frees users from the burden of complementing the functionalities of diverse platforms, without sacrificing performance}.

\subsection{Optimizer Scalability}
\label{section:experiments_indepth}

We continue our experimental study by evaluating the scalability of our optimizer in order to determine whether it operates efficiently on large \rheem plans and for large numbers of platforms.

\myparagraph{Experiment setup}
We start by evaluating our optimizer's scalability in terms of number of supported platforms and then proceed to evaluate it in terms of number of operators in a \rheem plan.
For the former, we consider hypothetical platforms with full \rheem operator coverage and three communication channels each.
For the latter, we generated \rheem plans with three basic topologies that we found to be at the core of many data analytic tasks: {\em pipeline}, {\em fanout}, and {\em tree}.
Notice that most iterative analytics also fall into these three topologies.

\myparagraph{Results}
Figure~\ref{figure:platformscalability} shows the optimization time of our optimizer for \task{Kmeans} when increasing the number of supported platforms -- the results for the other tasks are similar.
As expected, the time increases along with the number of platforms.
This is because (i)~the CCG gets larger, challenging our MCT algorithm, and (ii)~our lossless pruning has to retain more alternative subplans.
Still, we observe that our optimizer (the \emph{no top-$k$} series in Figure~\ref{figure:platformscalability}) performs well for a practical number of platforms: it takes less than $10$ seconds when having $5$ different platforms.
Yet, when adding a simple top-$k$ pruning strategy, our optimizer gracefully scales with the number platforms, \eg~for $k{=}8$ it takes less than $10$ seconds when having $10$ different platforms.
Note that our algebraic formulation of the plan enumeration problem allows to easily augment our optimizer with a top-$k$ pruning strategy (see Section~\ref{sec:enumeration_algebra}):
We just specify an additional rule for the \emph{Prune} operator.
%
Let us now proceed to evaluate our optimizer's scalability \wrt~the number of operators in a task.
Figure~\ref{figure:enumerationscalability} depicts the above mentioned plan topologies along with our experimental results.
The optimizer scales to very large plans for the pipeline and tree topologies.
In contrast, we get a different picture for the fanout topology:
The optimizer processed plans with at most 12~operators\footnote{The tasks from Table~\ref{tab:datasets} have 17~operators on average.} within a time frame of $5$ minutes.
Such plans entail hard MCT problems and allow for only very little (lossless) pruning.
However, we encountered only much smaller fanouts in real-world tasks.
We can thus conclude that {\em our optimizer can scale to a realistic number of platforms and to a reasonable number of operators in a \rheem plan.}


\begin{figure}[t!]
	\centering
	\subfigure[Scalability w.r.t platforms]{\includegraphics[scale=.25]{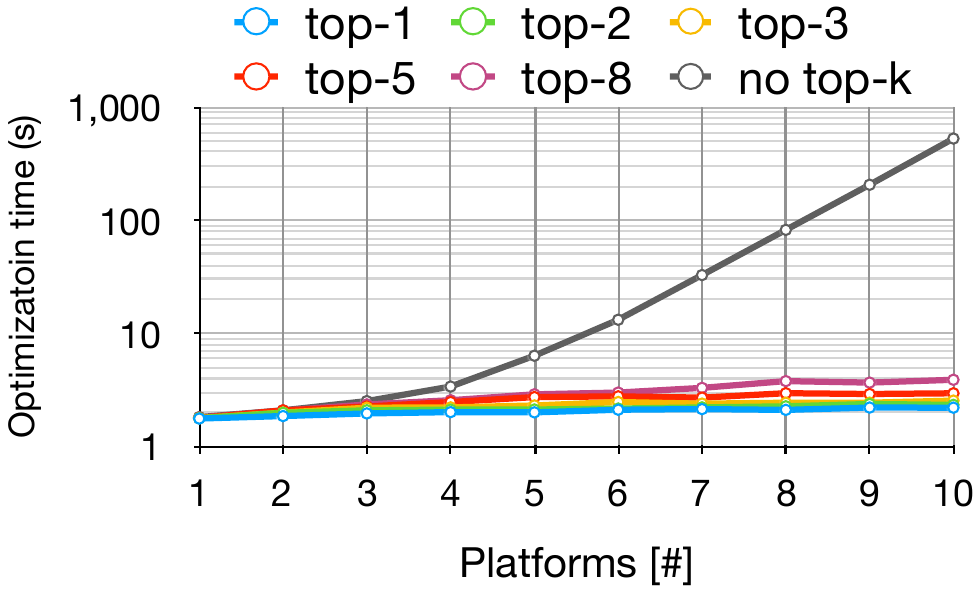}\label{figure:platformscalability}}
	\subfigure[Enumeration scalability]{\includegraphics[scale=.35]{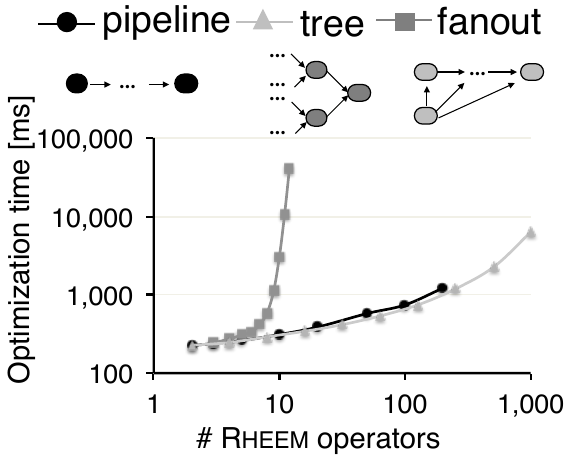}\label{figure:enumerationscalability}}
	\vspace{-0.4cm}
	\caption{Optimization scalability.}
	\label{figure:indepth}
	\vspace{-0.3cm}
\end{figure}

\subsection{Optimizer In Depth}
\label{section:extraexps}

Besides the scalability results, we also conducted several experiments to further evaluate the efficiency of our optimizer.
We start by analyzing the importance of the order, in which our enumeration algorithm process join groups (see Section~\ref{sec:enumeration_algorithm}).
As we observe in Figure~\ref{figure:extrasinternals}(a) for the tree topology, ordering the join groups order \emph{can} indeed be crucial.
On the other hand, the process of ordering the join groups does not seem to exert any measurable influence on the optimization time.

\begin{figure}[t!]
	\centering
	\includegraphics[width=\linewidth]{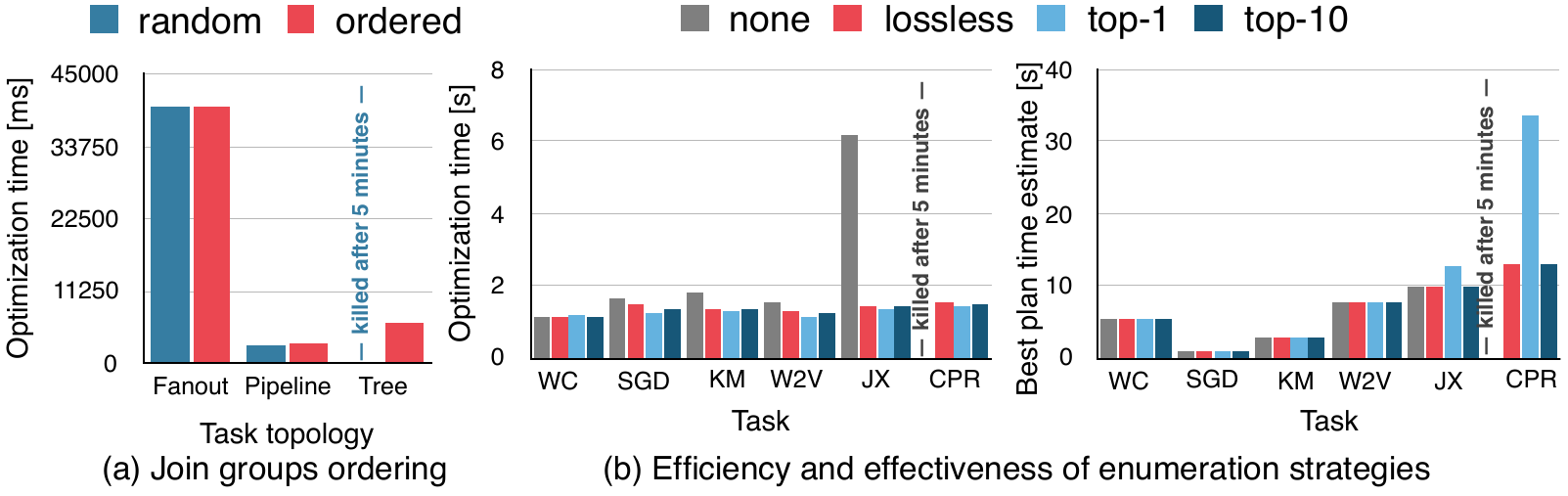}
	\caption{Optimizer internals.}
	\label{figure:extrasinternals}
\end{figure}

Additionally, we compare our lossless pruning strategy (Section~\ref{section:enumeration}) with several alternatives, namely no pruning at all and top-$k$ pruning%
\footnote{This is the same pruning as in Section~\ref{section:experiments_indepth}. However, while in Section~\ref{section:experiments_indepth} we used top-$k$ pruning to \emph{augment} our lossless pruning, here we consider it \emph{independently}.}
that retains the $k$ best subplans when applied to an enumeration.
Figure~\ref{figure:extrasinternals}(b) shows the efficiency results of all pruning strategies (on the left) as well as their effectiveness (on the right), \ie~the estimated execution times of their optimized plans.
Note that we did not use the actual plan execution times to assess the effectiveness of our enumeration strategy in order to eliminate the influence of the calibration of the cost functions.
As a first observation, we see that pruning is crucial overall:
An exhaustive enumeration was not possible for \task{CrocoPR} (\task{CPR}).
On the other hand, we found that the top-1 strategy, which merely selects the best alternative for each inflated operator, is pruning too aggressively and fails ($3$ out of $7$ times) to detect the optimal execution plan.
While the numbers now seem to suggest that the remaining lossless and top-10 pruning strategies are of the same value,
there is a subtle difference, though:
The lossless strategy \emph{guarantees} to find the optimal plan (w.r.t. the cost estimates) and is, thus, superior.
For large, complex \rheem plans, as discussed in the above paragraph, a combination of the lossless pruning followed by a top-$k$ pruning might be a valuable pruning strategy.
While the former keeps intermediate subplans diverse, the latter removes the worst plans.
This flexibility is a direct consequence of our algebraic approach to the plan enumeration problem.

\begin{figure}[t!]
	\centering
	\includegraphics[scale=0.3]{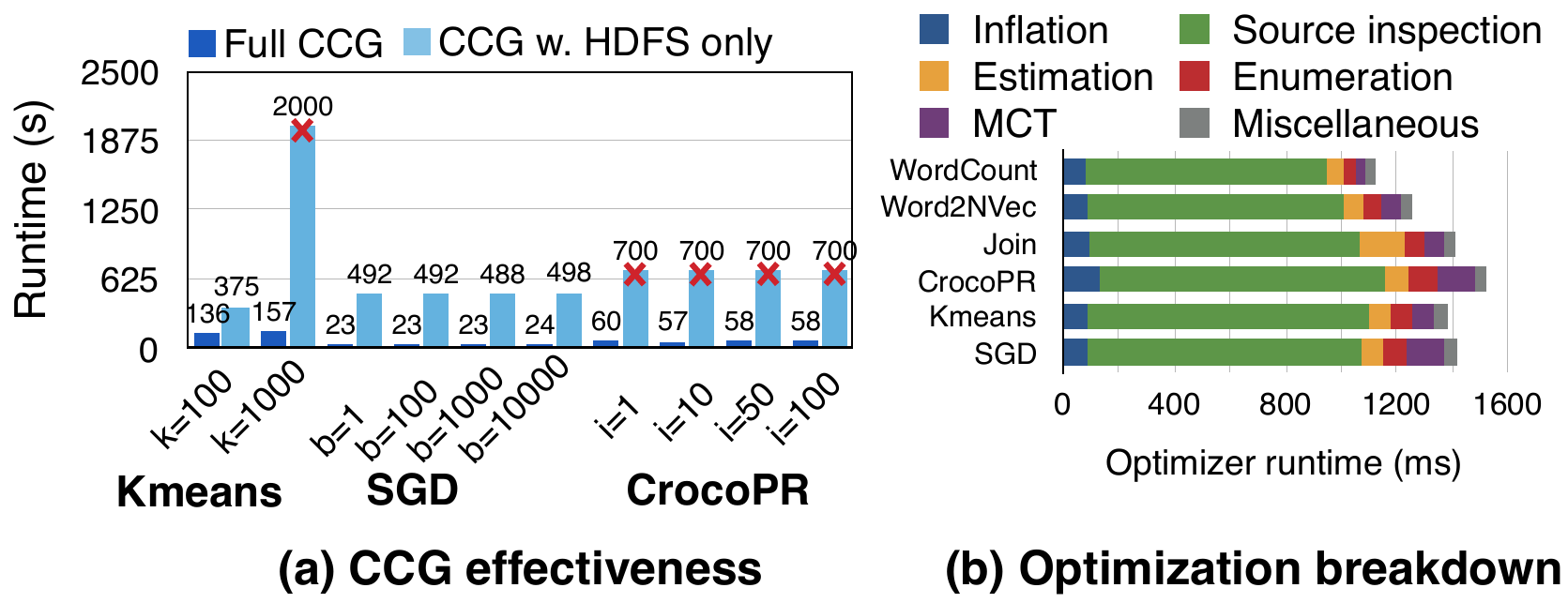}
	\caption{\add{CCG and optimization time breakdown.}}
	\label{figure:ccg}
\end{figure}

We additionally evaluate the effectiveness of \ccg.
For this, we assume that data movement is done only through writing to an HDFS file and thus, kept only the \at{HDFSFile} channel in \ccg and disabled all the rest.
Figure~\ref{figure:ccg}(a) shows the results in terms of runtime.
We observe that for \task{k-means} \rheem with the full \ccg is up to more than one order of magnitude faster than using only an HDFS file for the communication.
For \task{SGD} and \task{CrocoPR}, it is always more than one order of magnitude faster, in fact for \task{CrocoPR}, we had to kill the process after $700$ seconds.
This shows the high importance of our approach for the data communication.

\begin{figure}[t]
	\centering
	\includegraphics[scale=0.28]{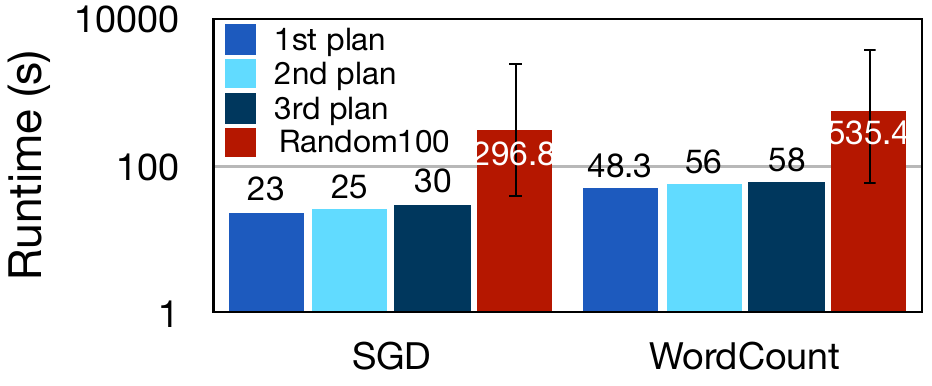}
	\caption{Cost model accuracy.	\label{figure:cost-model}}
\end{figure}

	We now validate the accuracy of our cost model.
	Note that similarly to traditional database cost-based optimizers, our cost model aims at enabling the optimizer to choose a good plan avoiding worst cases.
	That is, it does not aim at precisely estimating the running time of each plan.
	Thus, we evaluate the accuracy of our cost model by determining which plan of the search space our optimizer chooses.
	The ideal case would be to exhaustively run all possible execution plans and validate that our optimizer chooses (close to) the best plan.
	However, running all plans is infeasible as it would take weeks to complete only the small task of \task{WordCount} consisting of only $6$ operators.
	For this reason, in Figure~\ref{figure:cost-model} we plot for \task{SGD} and \task{WordCount} the following:
	(i)~the {\em real} execution time of the first three plans with the minimum {\em estimated} runtime (1st plan, 2nd plan, and 3rd plan); and
	(ii)~the minimum, maximum, and average of the real execution times of $100$ randomly chosen plans (Random100).
	We make the following observations.
	First, the 1st plan has the minimum real execution time compared to all other plans (including the 2nd and 3rd plans).
	Second, the first three plans have a better runtime not only compared to the average real execution time of the randomly chosen plans, but also compared to the minimum execution time of the randomly chosen plans.
	Based on these observations, we conclude that our cost model is sufficient for our optimizer to choose a near-optimal plan.

Finally, we analyze where the time is spent throughout the entire optimization process.
Figure~\ref{figure:ccg}(b) shows the breakdown of our optimizer's runtime in its several phases for several tasks.
At first, we note that the average optimization time amounts to slightly more than a second, which is several orders of magnitude smaller than the time savings from the previous experiments.
The lion's share of the runtime is the source inspection, which obtains cardinality estimates for the source operators of a \rheem plan (\eg~for inspecting an input file).
This could be improved, \eg~by a metadata repository or caches. In contrast, the enumeration and MCT discovery finished in the order of tens of milliseconds, even though they are of exponential complexity.


\section{Related Work}
\label{section:relatedwork}

In the past years, the research and industry communities have proposed many data processing platforms~\cite{MapReduce-ACM,apacheSpark,PostgreSQL,husky,stratosphere}.
In contrast to all these works, we do not provide a new processing platform but an optimizer to automatically combine and choose among such different platforms.

\noindent{\bf Cross-platform task processing} has been in the spotlight very recently.
Some works have proposed different solutions to decouple data processing pipelines from the underlying platforms~\cite{gog2015musketeer,bigdawg-demo,myria-cidr17,ires-bigdata,SimitsisWCD12,dbms+,tensorflow2015-whitepaper}.
Although their goals are similar, all these works differ substantially from our optimizer, as most of them do not consider data movement costs, which is crucial in cross-platform settings.
Note that some complementary works~\cite{weld,pipegen} focus on improving data movement among different platforms, but they do not provide a cross-platform optimizer.
Moreover, each of these systems \emph{additionally} differs from our optimizer in various ways.
Musketeer's main goal is to decouple query languages from execution platforms~\cite{gog2015musketeer}.
Its main focus lies on converting queries via a fixed intermediate representation and thus mostly targets platform independence.
BigDAWG~\cite{bigdawg-demo} comes with no optimizer and requires users to specify where to run cross-platform queries via its \texttt{Scope} and \texttt{Cast} commands.
Myria~\cite{myria-cidr17} provides a rule-based optimizer which is hard to maintain as the number of underlying platforms increases.
In~\cite{ires-bigdata} the authors present a cross-platform system intended for optimizing complex pipelines.
It allows only for simple one-to-one operator mappings and does not consider optimization at the atomic operator granularity.
The authors in~\cite{SimitsisWCD12} focus on ETL workloads making it hard to extend their proposed solution with new operators and other analytic tasks.
DBMS+~\cite{dbms+} is limited by the expressiveness of its declarative language and hence it is neither adaptive nor extensible.
Furthermore, it is unclear how DBMS+ abstracts underlying platforms seamlessly.
Other works, such as~\cite{pipegen,weld}, focus on improving data movement among different platforms and are complementary to our work.
Apache Calcite~\cite{apacheCalcite} decouples the optimization process from the underlying processing making it suitable for integrating several platforms. However, no cross-platform optimization is provided.
Tensorflow~\cite{tensorflow2015-whitepaper} follows a similar idea but for cross-device execution of machine learning tasks and thus it is orthogonal to \rheem.

\noindent{\bf Query optimization} has been the focus of a great amount of literature~\cite{queryoptimization}.
However, most of these works focus on relational-style query optimization, such as operator re-ordering and selectivity estimation, and cannot be directly applied to our system.
More closely to our work is the optimization for federated DBMSs where adaptive query processing and re-optimization is of great importance~\cite{BabuB05,rio,markl04}.
Nevertheless, the solutions of such works are tailored for relational algebra and assume tight control over the execution engine,
which is not applicable to our case.
Finally, there is a body of work on UDF-based data flow optimization, such as~\cite{sofa,hueske2012opening}.
Such optimizations are complementary to our optimizer and one could leverage them to better incorporate UDFs in our cost models.

\noindent{\bf MapReduce-based integration systems}, such as~\cite{miso,polybase},
mainly aim at integrating Hadoop with RDBMS and cannot be easily extended to deal with more diverse data analytic tasks and different processing platforms.
There are also works that automatically decide whether to run a MapReduce job locally or in a cluster, such as FlumeJava~\cite{chambers2010flumejava}.
Although such an automatic choice is crucial for some tasks, it does not generalize to data flows with other platforms.

\noindent{}Finally, {\bf federated databases} have been studied since almost the beginnings of the database field itself~\cite{federatedDBsSurvey}.
Garlic~\cite{garlic}, TSIMMIS~\cite{tsimmis}, and InterBase~\cite{interbase} are just three examples.
However, all these works significantly differ from ours in that they consider a single data model and push query processing to where the data is.

\section{Conclusion}
\label{section:conclusion}

We presented a cross-platform optimizer that automatically allocates a task to a combination of data processing platforms in order to minimize its execution cost.
Our optimizer considers the nature of the incoming task, platform characteristics, and data movement costs in order to select the most efficient platforms for a given task.
In particular, we proposed (i)~novel strategies to map platform-agnostic tasks to concrete execution strategies;
(ii)~a new graph-based approach to plan data movement among platforms;
(iii)~an algebraic formalization and novel solution to select the optimal execution strategy;
and~(iv)~how to handle the uncertainty found in cross-platform settings.
Our extensive evaluation showed that our optimizer allows tasks to run up to more than one order of magnitude faster than on any single platform.



\bibliographystyle{abbrv}
\bibliography{rheem}

\begin{thebibliography}{10}

\bibitem{apacheBeam}
{Apache Beam}.
\newblock \url{https://beam.apache.org}.

\bibitem{apacheDrill}
{Apache Drill}.
\newblock \url{https://drill.apache.org}.

\bibitem{apacheSpark}
{Apache Spark: Lightning-fast cluster computing}.
\newblock \url{http://spark.apache.org}.

\bibitem{javaStreams}
{Java~8 Streams} library.
\newblock
  \url{https://docs.oracle.com/javase/8/docs/api/java/util/stream/package-summary.html}.

\bibitem{luigi}
{Luigi project}.
\newblock \url{https://github.com/spotify/luigi}.

\bibitem{PostgreSQL}
{PostgreSQL}.
\newblock \url{http://www.postgresql.org}.

\bibitem{rheemwebsite}
{Rheem project: \url{http://da.qcri.org/rheem}}.

\bibitem{tensorflow2015-whitepaper}
M.~Abadi et~al.
\newblock {TensorFlow: A system for large-scale machine learning}.
\newblock In {\em USENIX Symposium on Operating Systems Design and
  Implementation (OSDI)}, pages 265--283, 2016.

\bibitem{rheem-demo-sigmod16}
D.~Agrawal, L.~Ba, L.~Berti-Equille, S.~Chawla, A.~Elmagarmid, H.~Hammady,
  Y.~Idris, Z.~Kaoudi, Z.~Khayyat, S.~Kruse, M.~Ouzzani, P.~Papotti, J.-A.
  Quian\'e-Ruiz, N.~Tang, and M.~Zaki.
\newblock {Rheem: Enabling Multi-Platform Task Execution}.
\newblock In {\em {SIGMOD}}, pages 2069--2072, 2016.

\bibitem{rheem-system-vldb18}
D.~Agrawal, S.~Chawla, B.~Contreras{-}Rojas, A.~K. Elmagarmid, Y.~Idris,
  Z.~Kaoudi, S.~Kruse, J.~Lucas, E.~Mansour, M.~Ouzzani, P.~Papotti,
  J.~Quian{\'{e}}{-}Ruiz, N.~Tang, S.~Thirumuruganathan, and A.~Troudi.
\newblock {{RHEEM:} Enabling Cross-Platform Data Processing - May The Big Data
  Be With You! -}.
\newblock {\em {PVLDB}}, 11(11):1414--1427, 2018.

\bibitem{rheem-vision-edbt16}
D.~Agrawal, S.~Chawla, A.~Elmagarmid, Z.~Kaoudi, M.~Ouzzani, P.~Papotti, J.-A.
  Quian\'e-Ruiz, N.~Tang, and M.~J. Zaki.
\newblock {Road to freedom in big data analytics}.
\newblock In {\em Proceedings of the International Conference on Extending
  Database Technology (EDBT)}, pages 479--484, 2016.

\bibitem{stratosphere}
A.~Alexandrov et~al.
\newblock {The Stratosphere platform for big data analytics}.
\newblock {\em VLDB Journal}, 23(6):939--964, 2014.

\bibitem{BabuB05}
S.~Babu and P.~Bizarro.
\newblock Adaptive query processing in the looking glass.
\newblock In {\em Proceedings of the Conference on Innovative Data Systems
  Research (CIDR)}, 2005.

\bibitem{rio}
S.~Babu, P.~Bizarro, and D.~J. DeWitt.
\newblock Proactive re-optimization with {Rio}.
\newblock In {\em Proceedings of the International Conference on Management of
  Data (SIGMOD)}, pages 936--938, 2005.

\bibitem{apacheCalcite}
E.~Begoli, J.~Camacho{-}Rodr{\'{\i}}guez, J.~Hyde, M.~J. Mior, and D.~Lemire.
\newblock {Apache Calcite: {A} Foundational Framework for Optimized Query
  Processing Over Heterogeneous Data Sources}.
\newblock In {\em SIGMOD}, pages 221--230, 2018.

\bibitem{interbase}
O.~A. Bukhres, J.~Chen, W.~Du, A.~K. Elmagarmid, and R.~Pezzoli.
\newblock Interbase: An execution environment for heterogeneous software
  systems.
\newblock {\em {IEEE} Computer}, 26(8):57--69, 1993.

\bibitem{garlic}
M.~J. Carey, L.~M. Haas, P.~M. Schwarz, M.~Arya, W.~F. Cody, R.~Fagin,
  M.~Flickner, A.~Luniewski, W.~Niblack, D.~Petkovic, J.~Thomas, J.~H.
  Williams, and E.~L. Wimmers.
\newblock Towards heterogeneous multimedia information systems: The {Garlic}
  approach.
\newblock In {\em Proceedings of the International Workshop on Research Issues
  in Data Engineering - Distributed Object Management ({RIDE-DOM})}, pages
  124--131, 1995.

\bibitem{chambers2010flumejava}
C.~Chambers, A.~Raniwala, F.~Perry, S.~Adams, R.~R. Henry, R.~Bradshaw, and
  N.~Weizenbaum.
\newblock {FlumeJava}: Easy, efficient data-parallel pipelines.
\newblock In {\em Proceedings of the {ACM} {SIGPLAN} Conference on Programming
  Language Design and Implementation ({PLDI})}, pages 363--375, 2010.

\bibitem{tsimmis}
S.~S. Chawathe, H.~Garcia{-}Molina, J.~Hammer, K.~Ireland, Y.~Papakonstantinou,
  J.~D. Ullman, and J.~Widom.
\newblock The {TSIMMIS} project: Integration of heterogeneous information
  sources.
\newblock In {\em Information Processing Society of Japan ({IPSJ})}, pages
  7--18, 1994.

\bibitem{greedy-group-steiner}
C.~Chekuri, G.~Even, and G.~Kortsarz.
\newblock A greedy approximation algorithm for the {Group Steiner} problem.
\newblock {\em Discrete Applied Mathematics}, 154(1):15--34, 2000.

\bibitem{MapReduce-ACM}
J.~Dean and S.~Ghemawat.
\newblock {MapReduce}: Simplified data processing on large clusters.
\newblock {\em Communications of the ACM}, 51(1), 2008.

\bibitem{polybase}
D.~J. DeWitt, A.~Halverson, R.~V. Nehme, S.~Shankar, J.~Aguilar{-}Saborit,
  A.~Avanes, M.~Flasza, and J.~Gramling.
\newblock Split query processing in {Polybase}.
\newblock In {\em Proceedings of the International Conference on Management of
  Data (SIGMOD)}, pages 1255--1266, 2013.

\bibitem{ires-bigdata}
K.~Doka, N.~Papailiou, V.~Giannakouris, D.~Tsoumakos, and N.~Koziris.
\newblock {Mix 'n' match multi-engine analytics}.
\newblock In {\em IEEE BigData}, pages 194--203, 2016.

\bibitem{duggan2015bigdawg}
J.~Duggan, A.~J. Elmore, M.~Stonebraker, M.~Balazinska, B.~Howe, J.~Kepner,
  S.~Madden, D.~Maier, T.~Mattson, and S.~B. Zdonik.
\newblock The {BigDAWG} polystore system.
\newblock {\em SIGMOD Record}, 44(2):11--16, 2015.

\bibitem{bigdawg-demo}
A.~Elmore, J.~Duggan, M.~Stonebraker, M.~Balazinska, U.~Cetintemel,
  V.~Gadepally, J.~Heer, B.~Howe, J.~Kepner, T.~Kraska, et~al.
\newblock A demonstration of the {BigDAWG} polystore system.
\newblock {\em Proceedings of the VLDB Endowment (PVLDB)}, 8(12):1908--1911,
  2015.

\bibitem{db2-federated}
S.~Ewen, H.~Kache, V.~Markl, and V.~Raman.
\newblock {Progressive Query Optimization for Federated Queries}.
\newblock In {\em EDBT}, pages 847--864, 2006.

\bibitem{polylogarithmic-group-steiner}
N.~Garg, G.~Konjevod, and R.~Ravi.
\newblock {A polylogarithmic approximation algorithm for the {Group Steiner
  Tree} problem}.
\newblock {\em Journal of Algorithms}, 37(1):66--84, 2000.

\bibitem{gog2015musketeer}
I.~Gog, M.~Schwarzkopf, N.~Crooks, M.~P. Grosvenor, A.~Clement, and S.~Hand.
\newblock Musketeer: All for one, one for all in data processing systems.
\newblock In {\em Proceedings of the European Conference on Computer Systems
  (EuroSys)}, pages 1--16. ACM, 2015.

\bibitem{pipegen}
B.~Haynes, A.~Cheung, and M.~Balazinska.
\newblock {PipeGen}: Data pipe generator for hybrid analytics.
\newblock In {\em Proceedings of the {ACM} Symposium on Cloud Computing
  ({SoCC})}, pages 470--483, 2016.

\bibitem{hueske2012opening}
F.~Hueske, M.~Peters, M.~J. Sax, A.~Rheinl{\"a}nder, R.~Bergmann, A.~Krettek,
  and K.~Tzoumas.
\newblock {Opening the black boxes in data flow optimization}.
\newblock {\em Proceedings of the VLDB Endowment (PVLDB)}, 5(11):1256--1267,
  2012.

\bibitem{queryoptimization}
Y.~E. Ioannidis.
\newblock {Query optimization}.
\newblock {\em {ACM Computing Surveys}}, 28(1):121--123, 1996.

\bibitem{engineindependence}
P.~Jovanovic, A.~Simitsis, and K.~Wilkinson.
\newblock Engine independence for logical analytic flows.
\newblock In {\em Proceedings of the International Conference on Data
  Engineering (ICDE)}, pages 1060--1071, 2014.

\bibitem{kossmann2000iterative}
D.~Kossmann and K.~Stocker.
\newblock Iterative dynamic programming: A new class of query optimization
  algorithms.
\newblock {\em ACM Transactions on Database Systems (TODS)}, 25(1):43--82,
  2000.

\bibitem{miso}
J.~LeFevre, J.~Sankaranarayanan, H.~Hacig{\"{u}}m{\"{u}}s, J.~Tatemura,
  N.~Polyzotis, and M.~J. Carey.
\newblock {{MISO}: Souping up big data query processing with a multistore
  system}.
\newblock In {\em Proceedings of the International Conference on Management of
  Data (SIGMOD)}, pages 1591--1602, 2014.

\bibitem{howgoodoptimizersare}
V.~Leis et~al.
\newblock How good are query optimizers, really?
\newblock {\em Proceedings of the VLDB Endowment (PVLDB)}, 9(3):204--215, 2015.

\bibitem{dbms+}
H.~Lim, Y.~Han, and S.~Babu.
\newblock {How to fit when no one size fits}.
\newblock In {\em Proceedings of the Conference on Innovative Data Systems
  Research (CIDR)}, 2013.

\bibitem{markl04}
V.~Markl, V.~Raman, D.~Simmen, G.~Lohman, H.~Pirahesh, and M.~Cilimdzic.
\newblock {Robust query processing through progressive optimization}.
\newblock In {\em Proceedings of the International Conference on Management of
  Data (SIGMOD)}, pages 659--670, 2004.

\bibitem{mitchell1998introduction}
M.~Mitchell.
\newblock {\em An introduction to genetic algorithms}.
\newblock MIT press, 1998.

\bibitem{weld}
S.~Palkar, J.~J. Thomas, A.~Shanbhag, M.~Schwarzkopt, S.~P. Amarasinghe, and
  M.~Zaharia.
\newblock A common runtime for high performance data analysis.
\newblock In {\em Proceedings of the Conference on Innovative Data Systems
  Research (CIDR)}, 2017.

\bibitem{rheem-tutorial}
J.-A. Quian\'e-Ruiz and Z.~Kaoudi.
\newblock { Cross-Platform Query Processing}.
\newblock In {\em ICDE (tutorial)}, 2018.

\bibitem{group-steiner}
G.~Reich and P.~Widmayer.
\newblock Beyond {Steiner's} problem: A {VLSI} oriented generalization.
\newblock In {\em Proceedings of the International Workshop on Graph-Theoretic
  Concepts in Computer Science ({WG})}, pages 196--210, 1989.

\bibitem{sofa}
A.~Rheinl{\"{a}}nder, A.~Heise, F.~Hueske, U.~Leser, and F.~Naumann.
\newblock {{SOFA:} An extensible logical optimizer for {UDF}-heavy data flows}.
\newblock {\em Information Systems}, 52:96--125, 2015.

\bibitem{selinger1979access}
P.~G. Selinger, M.~M. Astrahan, D.~D. Chamberlin, R.~A. Lorie, and T.~G. Price.
\newblock Access path selection in a relational database management system.
\newblock In {\em Proceedings of the International Conference on Management of
  Data (SIGMOD)}, SIGMOD '79, pages 23--34, 1979.

\bibitem{oraclehadoop}
S.~Shankar, A.~Choi, and J.-P. Dijcks.
\newblock {Integrating Hadoop data with Oracle parallel Processing}.
\newblock Oracle white paper, 2010.

\bibitem{federatedDBsSurvey}
A.~P. Sheth and J.~A. Larson.
\newblock Federated database systems for managing distributed, heterogeneous,
  and autonomous databases.
\newblock {\em {ACM Computing Surveys}}, 22(3):183--236, 1990.

\bibitem{SimitsisWCD12}
A.~Simitsis, K.~Wilkinson, M.~Castellanos, and U.~Dayal.
\newblock Optimizing analytic data flows for multiple execution engines.
\newblock In {\em Proceedings of the International Conference on Management of
  Data (SIGMOD)}, pages 829--840, 2012.

\bibitem{stonebraker2015the}
M.~Stonebraker.
\newblock The case for polystores.
\newblock ACM SIGMOD Blog.

\bibitem{myria-cidr17}
J.~Wang, T.~Baker, M.~Balazinska, D.~Halperin, B.~Haynes, B.~Howe,
  D.~Hutchison, S.~Jain, R.~Maas, P.~Mehta, D.~Moritz, B.~Myers, J.~Ortiz,
  D.~Suciu, A.~Whitaker, and S.~Xu.
\newblock {The Myria Big Data Management and Analytics System and Cloud
  Services}.
\newblock In {\em {CIDR}}, 2017.

\bibitem{husky}
F.~Yang, J.~Li, and J.~Cheng.
\newblock Husky: Towards a more efficient and expressive distributed computing
  framework.
\newblock {\em Proceedings of the VLDB Endowment (PVLDB)}, 9(5):420--431, 2016.

\end{thebibliography}

\balance

\end{document}